\newtheorem{Theorem}{Theorem}[section]
\newtheorem{Lemma}[Theorem]{Lemma}
\newtheorem{Corollary}[Theorem]{Corollary}
\newtheorem{Proposition}[equation]{Proposition}
\theoremstyle{definition}
\newtheorem{Definition}[equation]{Definition}
\newtheorem{Example}[equation]{Example}
\newtheorem{Remark}[equation]{Remark}
\numberwithin{equation}{section}
\newcommand{\F}{{\mathbb F}}
\newcommand{\C}{{\mathbb C}}
\newcommand{\Z}{{\mathbb Z}}
\newcommand{\N}{{\mathbb N}}
\newcommand{\mc}[1]{\mathcal{#1}}
\newcommand{\ms}[1]{\mathscr{#1}}
\newcommand{\mtt}[1]{\mathtt{#1}}
\newcommand{\mbf}[1]{\mathbf{#1}}
\newcommand{\qmd}[1]{\mathtt{d}_{\mathtt{#1}}}
\newcommand{\tr}{\text{tr}}
\DeclareMathOperator{\End}{End}
\title{MDS Stabilizer Poset Codes}
\author{Mahir Bilen Can\\
Tulane University, New Orleans, USA\\
Theoretical Sciences Visiting Program,\\ Okinawa Institute of Science and Technology, Japan\\
\texttt{mahirbilencan@gmail.com, mcan@tulane.edu}}
\begin{document}

\maketitle

\begin{abstract}
Poset metrics in the context of stabilizer codes are investigated. 
MDS stabilizer poset codes are defined. 
Various characterizations of these quantum codes are found. 
Methods for producing examples are proposed.  
\\

\noindent 
\textbf{Keywords: Quantum stabilizer codes, additive codes, poset metrics, MDS stabilizer codes} 
\\

\noindent 
\textbf{MSC: 	81P73, 94B05, 94B25} 
\end{abstract}

\normalsize

\section{Introduction}

The purpose of our work is to bring together two significant breakthroughs in the theory of error-correcting codes that took place in the 1990s. The first breakthrough, although much less known compared to the second, is the realization of a fundamental concept: the enhancement of the quality of any error-correcting code defined over a finite field through the manipulation of the underlying metric. While the seeds of this concept were sown a few years prior in the earlier works by Niederreiter \cite{Niederreiter1,Niederreiter2}, its formalization only crystallized with the publication of the paper by Brualdi, Graves, and Lawrence~\cite{BGL}. The authors not only generalized the ideas of Niederreiter but also managed to reconcile them with the findings of Rozenblyum and Tsfasman~\cite{RT1997}.
The second significant advancement was Shor's pivotal discovery of a methodology enabling the creation of a 9-qubit repetition code as elucidated in~\cite{Shor1995}. Shortly after Shor's breakthrough, many scientists, including Shor himself, advanced various aspects of quantum error correcting codes. This topic gained further momentum with the realization that quantum error correcting codes emerging as the fixed point space of a suitable group of operators hold exceptional utility~\cite{Gottesman1996, CRSS1997}. 
This was initially perceived in the binary case, but Rains's work~\cite{Rains1999} expedited the rapid expansion of the stabilizer code construction to encompass quantum error correcting codes over nonbinary stabilizer codes~\cite{AK2001}, \cite{KKKS2006}, \cite{KimWalker2004}. 
In the next several paragraphs, we will explain, leaving details to subsequent sections, how we merge the theory of poset metrics and the quantum error correction.

Let $\Z_+$ denote the semigroup of positive integers. 
For $n\in \Z_+$, we fix a poset $\mtt{P}$ whose underlying set is $[n]:=\{1,\dots,n\}$. 
Let $\xi$ be a primitive $p$-th root of unity, where $p$ is a prime number. 
Let $q:=p^m$ for some $m\in \Z_+$.
The finite field with $q$ elements is denoted by $\F_q$. 
For each $a\in \F_q$, we have unitary operators $X(a):\C^q\to \C^q$, $|x\rangle \mapsto | x+a\rangle$ and $Z(a):\C^q\to \C^q$, $|x\rangle \mapsto \xi^{\tr(ax)} | x \rangle$, where $\{ |x \rangle \mid x\in \F_q \}$ is a fixed orthonormal basis for $\C^q$. 
Let $G_n$ denote the finite group generated by the error operators 
$\xi^c X(a_1)Z(b_1)\otimes\cdots \otimes X(a_n)Z(b_n) : (\C^q)^{\otimes n}\to (\C^q)^{\otimes n}$,
where $\{a_1,b_1,\dots, a_n, b_n\}\subset \F_q$ and $c \in \{0,1,\dots, p-1\}$.
The {\em support} of an error operator $g:=\xi^c X(a_1)Z(b_1)\otimes\cdots \otimes X(a_n)Z(b_n) \in G_n$, denoted ${\rm supp}(g)$, is defined as the set of indices $i\in [n]$ such that $X(a_i)Z(b_i) \neq 1$.
We denote by $\mtt{I}(g)$ the smallest lower order ideal of $\mtt{P}$ containing ${\rm supp}(g)$. 
Then the {\em operator $\mtt{P}$-weight} of $g$, denoted $\mtt{wt}_{\mtt{P}}(g)$, is defined to be the cardinality of $\mtt{I}(g)$, 
that is, $\mtt{wt}_{\mtt{P}}(g) := |\mtt{I}(g)|$.
This definition is crucial for introducing the quantum poset codes.
Detectable errors in classical quantum error-correcting codes are discussed in several articles including~\cite{KLV}.
Let $\qmd{P}\in \Z_+$.
If $Q$ detects all errors in $G_n$ of operator $\mtt{P}$-weight less than $\qmd{P}$ but cannot detect an error of operator $\mtt{P}$-weight $\qmd{P}$, then we call $Q$ a {\em quantum $\mtt{P}$-code of minimum weight $\qmd{P}$}.
To concisely state these definitions, we write: {\em $Q$ is a quantum $\mtt{P}$-code with parameters $[[n,K,\mtt{d}_P]]_q$.} 
Also, in complete analogy to the definition of ordinary stabilizer codes, we say that a quantum $\mtt{P}$-code $Q$ is a {\em stabilizer $\mtt{P}$-code} if there exists a subgroup $S$ of $G_n$ such that $Q$ is given by $Q=\{v\in (\C^q)^{\otimes n} : gv=v\text{ for all $g\in S$}\}$.
We now proceed to state the first main result of this article, which will pave the way for developing quantum poset code analogs of the quantum maximum distance separable (MDS) codes, initially introduced by Knill and Laflamme in the binary case~\cite{KnillLaflamme}, and extended by Rains~\cite{Rains1999} to codes over fields of odd characteristic

\begin{Theorem}\label{intro:T1}
Let $\mtt{P}$ be a poset on $[n]$. 
If a stabilizer $\mtt{P}$-code $Q$ has parameters $[[n,K,\mtt{d}_{\mtt{P}}]]_q$, where $K>1$, then the following inequality holds: 
$K \leq q^{n-2\qmd{P}+2}$.
\end{Theorem}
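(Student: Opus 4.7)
The plan is to reformulate the stabilizer code in symplectic additive form and then deduce the bound by a puncturing argument along a lower-order ideal of size $\qmd{P} - 1$.

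First, I would pass from the stabilizer subgroup $S \subseteq G_n$ to its associated additive code $C \subseteq \F_q^{2n}$ via the standard map sending $\xi^c X(a_1) Z(b_1) \otimes \cdots \otimes X(a_n) Z(b_n)$ to $(a_1, \ldots, a_n \,|\, b_1, \ldots, b_n)$. Abelianness of $S$ modulo phases becomes symplectic self-orthogonality $C \subseteq C^\perp$, with $|C| = q^{n-k}$, $|C^\perp| = q^{n+k}$, and $K = q^k$. The operator $\mtt{P}$-weight of a stabilizer element matches $\mtt{wt}_{\mtt{P}}(v) = |\mtt{I}(\mathrm{supp}(v))|$ on the corresponding vector, and the distance condition becomes: every $v \in C^\perp \setminus C$ satisfies $\mtt{wt}_{\mtt{P}}(v) \geq \qmd{P}$.

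Next, I would fix a lower-order ideal $I \subseteq \mtt{P}$ with $|I| = \qmd{P} - 1$, constructed by adjoining minimal elements of $\mtt{P}$ one at a time (we may assume $\qmd{P} \leq n + 1$, else the bound is vacuous), and set $J := [n] \setminus I$, so $|J| = n - \qmd{P} + 1$. Consider the coordinate projection $\pi_J : \F_q^{2n} \to \F_q^{2|J|}$. Any $v \in C^\perp$ with $\mathrm{supp}(v) \subseteq I$ has $\mtt{wt}_{\mtt{P}}(v) \leq |I| < \qmd{P}$, so the distance hypothesis forces $v \in C$; equivalently, $(C^\perp)^I = C^I$, where $W^I$ denotes the subgroup of $W$ consisting of vectors supported on $I$.

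In the pure case, when $C$ itself has minimum $\mtt{P}$-weight at least $\qmd{P}$, we additionally have $C^I = \{0\}$, so $\pi_J$ is injective on $C^\perp$, giving
\[
q^{n+k} \;=\; |C^\perp| \;\leq\; q^{2|J|} \;=\; q^{2(n - \qmd{P} + 1)},
\]
from which $k \leq n - 2\qmd{P} + 2$, and hence $K \leq q^{n - 2\qmd{P} + 2}$, as desired.

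The main technical obstacle lies in the impure case, where $C$ contains nonzero vectors of $\mtt{P}$-weight strictly less than $\qmd{P}$ and $C^I$ can be nontrivial. To extend the argument, I would either invoke a purification reduction analogous to classical results of Rains (replacing $C$ by a pure self-orthogonal code with the same or better parameters) or exploit the coordinatewise symplectic duality $(C|_I)^{\perp_I} = (C^\perp)^I|_I$ in conjunction with $(C^\perp)^I = C^I$ to derive the refined relation $|C^J| = |C^I| \cdot q^{n - 2\qmd{P} + 2 - k}$, and then bound $|C^I|$ using the coordinatewise identity $|C|_I| \cdot |C^I| = q^{2(\qmd{P} - 1)}$. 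The delicacy here is that poset weight is not additive across coordinates, so the usual Hamming-case maneuvers require nontrivial adaptation to the poset setting.
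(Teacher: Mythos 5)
Your reduction to the symplectic additive picture and your pure-case argument are both correct: for a lower order ideal $I$ with $|I|=\qmd{P}-1$, every $v\in C^{\perp_s}$ supported on $I$ has $\mtt{P}$-weight at most $|I|<\qmd{P}$ and hence lies in $C$, so under purity the projection $\pi_J$ is injective on $C^{\perp_s}$ and $q^nK\le q^{2(n-\qmd{P}+1)}$ follows. This is a genuinely different route from the paper's, which never punctures: there one applies the classical Hyun--Kim--Singleton bound (Lemma~\ref{L:HKSbound}) to the dual additive code $D^{\perp_a}\subseteq\F_{q^2}^n$ (the image of $C^{\perp_s}$ under the isometry of Lemma~\ref{L:isometry}) and then uses the coset-translation Lemma~\ref{L:morethantwo} to relate the minimum $\mtt{P}$-weight of the full dual to the relative minimum weight over $D^{\perp_a}\setminus D$ when $K>1$, so that no purity hypothesis enters.

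The genuine gap is the impure case, which you flag but do not close, and the theorem as stated makes no purity assumption. When $C^I\neq\{0\}$ the kernel of $\pi_J$ on $C^{\perp_s}$ is exactly $C^I$, and your inequality degrades to $\log_qK\le n-2\qmd{P}+2+\log_q|C^I|$, which is weaker than what is claimed. Neither proposed repair is carried out, and neither is routine: a purification reduction preserving the parameters $[[n,K,\qmd{P}]]_q$ is not available off the shelf even in the Hamming metric (the impure quantum Singleton bound is classically obtained by weight-enumerator/shadow methods rather than by puncturing), and your second suggestion only reshuffles the unknown quantity $|C^I|$ without producing the bound $|C^I|=1$ that the argument needs. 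The device the paper uses at this point is Lemma~\ref{L:morethantwo}: for $K>1$ the quotient $C^{\perp_s}/C$ has more than two elements, so a minimum-weight configuration can be translated into $C^{\perp_s}\setminus C$, allowing the Singleton argument to run on the full dual code. Some ingredient of this kind (or a different global argument) is required before your proof covers the statement as written.
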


In the classical theory of poset codes, the first example of MDS poset codes was provided by Rosenbloom and Tsfasman in~\cite{RT1997}, where they introduced the analog of Reed-Solomon codes in the $m$-metric (now referred to as the NRT-metric). Later, Skriganov presented an alternative construction in~\cite{Skriganov2001}, while studying the distribution of points within the unit cube. The MacWilliams identities for MDS NRT-metric poset codes were subsequently explored in~\cite{DoughertySkriganov2002}. These works played a pivotal role in motivating the development of the theory of MDS poset codes, which we now introduce in the context of our work.

We call a stabilizer code $Q$ an {\em MDS stabilizer $\mtt{P}$-code} if its dimension $K$ achieves the upper bound specified by the inequality in Theorem~\ref{intro:T1}. The full range of properties of these codes will become apparent once we introduce a further class of stabilizer poset codes. 
To facilitate our discussion, we first extend our framework by reviewing a well-known but crucial connection between stabilizer quantum codes and additive codes.

Let $H:=\F_{q^t}$ for some $t\in \Z_+$. 
We call a subspace $C\subseteq H^s$, where $s\in \Z_+$, an {\em additive code} if it is $\F_q$-linear, but not necessarily $\F_{q^t}$-linear.
Let $\mtt{P}$ be a poset on $[n]$. 
Let $(a_1,\dots,a_n , b_1,\dots, b_n )$ be an element of $\F_q^{2n}$. 
To simplify our notation we denote it by $\underline{\mathbf{ab}}$, where $\mathbf{a}$ represents the first $n$ entries, and $\mathbf{b}$ represents the latter $n$ entries.
Then we define the {\em symplectic $\mtt{P}$-weight} of $\underline{\mathbf{a}\mathbf{b}}$ as follows: 
$wt_{\Delta\mtt{P}} (\underline{\mathbf{a}\mathbf{b}}) := | \mtt{P}_{\mbf{a}}\cup \mtt{P}_{\mbf{b}} |$, 
where $\mtt{P}_{\mbf{a}}$ (resp. $\mtt{P}_{\mbf{b}}$) stands for the smallest lower order ideal of $\mtt{P}$ that contains the indices of all nonzero entries of $\mathbf{a}$ (resp. of $\mathbf{b}$).
It worths pointing out the fact that if $g\in G_n$ is an operator of the form $g=\xi^c X(\mbf{a})Z(\mbf{b})$, then its operator $\mtt{P}$-weight defined earlier is equal to $wt_{\Delta\mtt{P}}(\underline{\mathbf{a}\mathbf{b}})$. 
The {\em trace-symplectic form on $\F_q^{2n}$} is defined by 
$\langle \underline{\mathbf{a}\mathbf{b}}, \underline{\mathbf{a}' \mathbf{b}'}\rangle_s:= \text{tr}(\mbf{b}\cdot \mbf{a}' - \mbf{b}'\cdot \mbf{a}\rangle$. 
Here, the dots in $\mbf{b}\cdot \mbf{a}'$ and $\mbf{b}'\cdot \mbf{a}$ stand for the ordinary dot products of $n$-tuples. 
In this notation, the second main result of our paper is the following assertion.

\begin{Theorem}\label{intro:T2}
Let $Q$ be a stabilizer $\mtt{P}$-code as in Theorem~\ref{intro:T1}. 
Let $S$ be the stabilizer group of $Q$ in $G_n$. Then  
there is a natural map $\varphi : G_n \to \F_q^{2n}$ sending $S$ onto a self-orthogonal $\F_p$-linear $\F_q$-code $C$ of length $2n$. 
Furthermore, the minimum $\mtt{P}$-weight of $Q$ is given by 
\[
\qmd{P} = \min \{wt_{\Delta\mtt{P}} (\underline{\mathbf{a}\mathbf{b}}) :\  \underline{\mathbf{a}\mathbf{b}} \in C^{\perp_s}\setminus C\},
\] 
where $C^{\perp_s}$ is the dual of $C$ with respect to the trace-symplectic form.
\end{Theorem}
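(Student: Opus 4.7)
I would construct the map $\varphi: G_n \to \F_q^{2n}$ by setting $\varphi(\xi^c X(\mbf{a})Z(\mbf{b})) := \underline{\mbf{a}\mbf{b}}$, i.e., forgetting the scalar phase. Using the single-qudit commutation identity $Z(b)X(a) = \xi^{\tr(ab)} X(a)Z(b)$ componentwise, one verifies that $\varphi$ is a surjective homomorphism from the multiplicative group $G_n$ onto the additive group $(\F_q^{2n},+)$ with kernel equal to the center $\langle \xi I\rangle$. Since $(\F_q^{2n},+)$ is an elementary abelian $p$-group, every subgroup is automatically an $\F_p$-subspace, so $C := \varphi(S)$ is $\F_p$-linear (but not $\F_q$-linear in general, because $\varphi$ respects addition, not $\F_q$-scalar multiplication).

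Next, by iterating the commutation relation I would derive the identity $g h = \xi^{\langle \varphi(g),\, \varphi(h)\rangle_s} h g$ for any $g,h \in G_n$, where $\langle \cdot, \cdot \rangle_s$ is the trace-symplectic form. The stabilizer $S$ of a nontrivial quantum code is necessarily abelian (otherwise its simultaneous $+1$ eigenspace would be zero), so the relation $g h = h g$ for all $g,h \in S$ forces $\langle \varphi(g), \varphi(h)\rangle_s = 0$; hence $C \subseteq C^{\perp_s}$, i.e., $C$ is self-orthogonal.

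For the distance formula I would invoke the standard characterization of detectable errors for stabilizer codes (the Knill--Laflamme conditions specialized to the Pauli group): an error $g \in G_n$ is undetectable by $Q$ if and only if $g$ commutes with every element of $S$ yet $g$ is not of the form $\xi^c s$ with $s \in S$. Translating via $\varphi$, the first condition reads $\varphi(g) \in C^{\perp_s}$, and the failure of the second reads $\varphi(g) \notin C$. Since the scalar $\xi^c$ contributes nothing to the support of $g$, one has $\mtt{wt}_{\mtt{P}}(g) = wt_{\Delta\mtt{P}}(\varphi(g))$; combining this with the definition of $\qmd{P}$ as the smallest operator $\mtt{P}$-weight of any undetectable error yields the claimed formula.

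The main obstacle will be verifying this detectability dichotomy rigorously, and in particular confirming that an element $g \in G_n$ whose image already lies in $C$ (so $g = \xi^c s$ for some $s \in S$) indeed acts on $Q$ as a scalar and is therefore detectable. This step relies on the standing assumption, forced upon us by $K > 1$, that $S$ contains no nontrivial scalar multiple of the identity, together with the fact that $Q$ is defined as the joint $+1$ eigenspace of $S$.
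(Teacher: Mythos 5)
Your proposal is correct and follows essentially the same route as the paper: the paper defines the identical map $\varphi$, invokes the Ashikhmin--Knill detectability criterion ($Q$ detects $g$ iff $g\in Z(G_n)\rtimes S$ or $g\notin C_{G_n}(S)$) so that undetectable errors correspond to $C^{\perp_s}\setminus C$, and uses the weight identity $\mtt{wt}_{\mtt{P}}(g)=wt_{\Delta\mtt{P}}(\underline{\mathbf{a}\mathbf{b}})$. The only difference is that you work out the homomorphism, commutation, and self-orthogonality details explicitly, whereas the paper delegates them to the cited references of Ashikhmin--Knill and Ketkar et al.
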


For $t\in \Z_+$, we say that a stabilizer $\mtt{P}$-code $Q$ is {\em $\mtt{P}$-pure to $t$} if its stabilizer group $S$ in $G_n$ does not contain any non-scalar error operator whose operator $\mtt{P}$-weight is less than $t$. A stabilizer $\mtt{P}$-code is called {\em $\mtt{P}$-pure} if it is pure to its minimum $\mtt{P}$-weight $\qmd{P}$.
Regarding these notions, we have a theorem which is useful for producing MDS stabilizer poset codes from additive codes. 
To explain, we fix an $\F_q$-vector space basis $\{1,\gamma\}$ for $\F_{q^2}$ so that $\mathbf{v}\in \F_{q^2}^n$ can be written in the form 
$\mathbf{v} = \mathbf{a}1+\mathbf{b}\gamma$ for some $\mathbf{a},\mathbf{b}$ from $\F_q^n$.
The {\em trace-alternating form on $\F_{q^2}^n$} is defined by
\begin{align*}
\langle \mathbf{v} | \mathbf{w} \rangle_a = \text{tr}_{q/p}\left( \frac{ \mathbf{v}\cdot \mathbf{w}^q - \mathbf{v}^q\cdot \mathbf{w}}{\gamma^{q}-\gamma} \right), \quad \text{ where }\quad \mathbf{v},\mathbf{w}\in \F_{q^2}^n.
\end{align*}
The dual of an additive code $D\subset \F_{q^2}^n$ with respect to the trace-alternating form is denoted by $D^{\perp_a}$.
The map $\psi: \F_{q^2}^n \to \F_q^{2n}$, $\mathbf{a}1+ \mathbf{b}\gamma \mapsto \underline{\mathbf{a}\mathbf{b}}$ is isometric 
in the following sense:
$$
\langle  \mbf{a}1+\mbf{b}\gamma , \mbf{a}'1+ \mbf{b}'\gamma \rangle_a = \langle \underline{\mbf{a}\mbf{b}}, \underline{\mbf{a}'\mbf{b}'} \rangle_s
\quad \text{ for every $\mbf{a}1+\mbf{b}\gamma$, $\mbf{a}'1+ \mbf{b}'\gamma$ from $\F_{q^2}^n$}.
$$
For a proof, see~\cite[Lemma 14]{KKKS2006}.
Our next theorem provides us with a criteria for deciding when a stabilizer poset code is an MDS stabilizer poset code.

\begin{Theorem}\label{intro:T3}
Let $Q$ be a stabilizer $\mtt{P}$-code with parameters $[[n,K,\qmd{P}]]_q$. 
Let $D=\psi^{-1}\circ \varphi (S)$, where $\psi$ is defined above and $\varphi$ is the map from Theorem~\ref{intro:T2}. 
If $Q$ is $\mtt{P}$-pure, then the following assertions hold: 
\begin{enumerate}
\item[(1)] $Q$ is an MDS stabilizer $\mtt{P}$-code if and only if $D^{\perp_a}$ is an MDS additive $\mtt{P}$-code.
\item[(2)] If $Q$ is an MDS stabilizer $\mtt{P}$-code, and the inequality $n-\mtt{d}_{\mtt{P}} +2 \leq d_{\mtt{P}}(D)$ holds, then $D$ is an MDS additive $\mtt{P}$-code. 
\end{enumerate}
\end{Theorem}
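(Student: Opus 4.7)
The plan is to transport the problem along the $\F_p$-linear isometry $\psi$ (which we first verify is weight-preserving as well as form-preserving) and reduce both parts to a Singleton-type bound for additive $\mtt{P}$-codes in $\F_{q^2}^n$.

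First I would set up the cardinality dictionary. Writing $K = q^k$, the standard correspondence between stabilizer codes over $\F_q$ and self-orthogonal $\F_p$-linear symplectic codes yields $|C| = q^{n-k}$, and then $|C^{\perp_s}| = q^{n+k}$ from the $\F_p$-nondegeneracy of $\langle \cdot, \cdot \rangle_s$. Weight-preservation of $\psi$ is immediate from the definitions: the support of $\mbf{a}1 + \mbf{b}\gamma$ in $[n]$ equals $\text{supp}(\mbf{a}) \cup \text{supp}(\mbf{b})$, whose smallest enclosing lower ideal is $\mtt{P}_{\mbf{a}} \cup \mtt{P}_{\mbf{b}}$. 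Hence $|D| = q^{n-k}$, $|D^{\perp_a}| = q^{n+k}$, and the $\mtt{P}$-weight spectra of $D$ and $C$ (and of $D^{\perp_a}$ and $C^{\perp_s}$) agree. Pureness of $Q$ says every non-scalar element of $S$ has operator $\mtt{P}$-weight at least $\qmd{P}$, so every nonzero element of $C$ has symplectic $\mtt{P}$-weight at least $\qmd{P}$. Combined with Theorem~\ref{intro:T2}, which identifies $\qmd{P}$ with the minimum weight of $C^{\perp_s} \setminus C$, this forces $d_{\mtt{P}}(D^{\perp_a}) = \qmd{P}$.

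For part (1), the MDS equality $K = q^{n-2\qmd{P}+2}$ says $k = n - 2\qmd{P} + 2$, which via the dictionary is $|D^{\perp_a}| = q^{2(n - \qmd{P} + 1)}$; this is exactly the Singleton equality for an additive $\mtt{P}$-code of minimum weight $\qmd{P}$, so the MDS property of $Q$ is equivalent to that of $D^{\perp_a}$. For part (2), assuming $Q$ is MDS, part (1) together with $|D|\,|D^{\perp_a}| = q^{2n}$ gives $|D| = q^{2(\qmd{P} - 1)}$. The Singleton bound applied to $D$ then yields $d_{\mtt{P}}(D) \leq n - \qmd{P} + 2$; the hypothesis supplies the reverse inequality, so $d_{\mtt{P}}(D) = n - \qmd{P} + 2$ and $|D|$ attains the MDS equality.

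The main obstacle I anticipate is establishing the sharp Singleton-type bound $|E| \leq q^{2(n - d_{\mtt{P}}(E) + 1)}$ for $\F_p$-linear additive codes $E \subseteq \F_{q^2}^n$ in the $\mtt{P}$-metric. The argument should run: pick any lower order ideal $I \subseteq \mtt{P}$ with $|I| = d_{\mtt{P}}(E) - 1$ (such an $I$ exists as a prefix of any linear extension), observe that no nonzero element of $E$ has support inside $I$ since its $\mtt{P}$-weight exceeds $|I|$, and conclude that coordinate projection to $\F_{q^2}^{n - |I|}$ is injective on $E$, whence $|E| \leq q^{2(n - d_{\mtt{P}}(E) + 1)}$. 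Care is required because $E$ is only $\F_p$-linear, but the counting bound for the image uses only the size of $\F_{q^2}^{n-|I|}$, so the argument goes through.
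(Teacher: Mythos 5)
Your proposal is correct and follows essentially the same route as the paper: translate cardinalities via $|C^{\perp_s}|=q^nK$ and the isometry $\psi$, use purity together with Theorem~\ref{intro:T2} to identify $d_{\mtt{P}}(D^{\perp_a})$ with $\qmd{P}$, and then read off both parts from the Hyun--Kim--Singleton bound. The only cosmetic difference is that you re-derive that Singleton-type bound for additive codes by the projection argument, whereas the paper simply invokes Lemma~\ref{L:HKSbound}, which is already stated for arbitrary nonempty subsets of $\F_q^n$.
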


How often do we encounter an MDS stabilizer poset code? 
The answer is ``very often.''
In our next result we show that for each member of a specific family of linear codes in $\F_{q^2}^n$ we have an MDS stabilizer poset code.

\begin{Theorem}\label{intro:T4}
Let $D$ be an $\F_{q^2}$-linear code in $\F_{q^2}^n$ such that $D^{\perp_a} \subseteq D$.
If $k$ denotes the dimension $\dim_{\F_{q^2}} D$, then there exists a poset $\mtt{P}$ on $[n]$ and a $\mtt{P}$-pure MDS stabilizer $\mtt{P}$-code $Q$ with parameters $[ [ n, n-2k, \mtt{d}_{\mtt{P}}] ]_q$.
\end{Theorem}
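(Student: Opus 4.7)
The plan is to take $\mtt{P}$ to be a chain on $[n]$ after a suitable relabelling, and to apply Theorems~\ref{intro:T2} and~\ref{intro:T1} to deduce MDS-ness and $\mtt{P}$-purity simultaneously.

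Via the isometry $\psi$, the self-orthogonality hypothesis on $D$ yields an additive code in $\F_q^{2n}$ self-orthogonal under the trace-symplectic form; the standard correspondence behind Theorem~\ref{intro:T2} then produces a stabilizer subgroup $S \leq G_n$ whose image $\varphi(S)$ is this additive code and whose stabilizer code $Q$ has the claimed parameters $[[n, n-2k, \qmd{P}]]_q$. For any poset $\mtt{P}$ on $[n]$, Theorem~\ref{intro:T2} gives
\[
\qmd{P} \;=\; \min\{wt_{\Delta\mtt{P}}(\psi(v)) : v \in D^{\perp_a}\setminus D\},
\]
while Theorem~\ref{intro:T1} forces $\qmd{P} \leq k+1$. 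It therefore suffices to find $\mtt{P}$ such that $wt_{\Delta\mtt{P}}(\psi(v)) \geq k+1$ for every nonzero $v \in D^{\perp_a}$: this single inequality saturates Theorem~\ref{intro:T1} (yielding MDS-ness) and, because the stabilizer is contained in $D^{\perp_a}$, simultaneously certifies $\mtt{P}$-purity.

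The construction rests on the classical fact that any $\F_{q^2}$-linear code of dimension $r$ in $\F_{q^2}^n$ admits an \emph{information set}: an $r$-element coordinate subset on which it projects injectively. Applied to the $(n-k)$-dimensional code $D^{\perp_a}$, this produces a set $I \subseteq [n]$ of size $n-k$ outside which $D^{\perp_a}$ is supported only trivially. Since coordinate permutations are isometries of both the trace-alternating and trace-symplectic forms, after relabelling we may assume $I = \{k+1,\dots,n\}$; we then take $\mtt{P}$ to be the chain $1 < 2 < \cdots < n$. In this chain, the smallest lower ideal containing ${\rm supp}(v)$ equals $\{1,\dots,\max\,{\rm supp}(v)\}$, and the information-set property forces $\max\,{\rm supp}(v) \geq k+1$ for every nonzero $v \in D^{\perp_a}$. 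Consequently $wt_{\Delta\mtt{P}}(\psi(v)) \geq k+1$, and combining with the upper bound gives $\qmd{P} = k+1$, proving both the MDS equality and $\mtt{P}$-purity.

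The main obstacle is simply the choice of poset: most posets will not yield MDS-ness, and the information-set observation is precisely what guarantees that a chain works after a suitable coordinate relabelling. The remaining steps are bookkeeping with $\psi$, $\varphi$, and the routine fact that permutations of coordinates preserve both bilinear forms in play.
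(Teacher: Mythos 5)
Your argument is correct, and it takes a genuinely different route from the paper's. The paper is non-constructive at the decisive step: it invokes \cite{HyunKim2008}, Theorem 4.2, as a black box to produce a poset for which the relevant additive code is MDS, and then transfers MDS-ness and purity to $Q$ through Remark~\ref{R:Pure} and Theorem~\ref{intro:T3}(1). You instead build the poset explicitly --- a chain in which an information set of the $(n-k)$-dimensional code $D^{\perp_a}$ occupies the top $n-k$ positions --- and check directly that every nonzero vector of $D^{\perp_a}$ has $\mtt{P}$-weight at least $k+1$; sandwiching against the bound $\qmd{P}\leq k+1$ supplied by Theorem~\ref{intro:T1} then yields $\qmd{P}=k+1$, the MDS equality, and $\mtt{P}$-purity in one stroke, bypassing Theorem~\ref{intro:T3} altogether. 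In effect you have inlined the proof of the Hyun--Kim existence theorem in exactly the case needed, and you have applied it to $D^{\perp_a}$ rather than to $D$; this is arguably the cleaner orientation, since it is the minimum weight of $D^{\perp_a}\setminus D$ (via Theorem~\ref{intro:T2}) that controls $\qmd{P}$, whereas the paper applies the Hyun--Kim result to $D$ and then needs Theorem~\ref{intro:T3}(1), whose hypothesis concerns $D^{\perp_a}$. What your version buys is an explicit poset and a self-contained verification; what the paper's buys is brevity.

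Two small points to tighten. First, for the correspondence $K=q^n/|D|$ to produce the stated parameters $[[n,n-2k,\qmd{P}]]_q$, the code $D=\psi^{-1}\varphi(S)$ must itself be self-orthogonal, i.e.\ $D\subseteq D^{\perp_a}$, so that Theorem~\ref{intro:T2} minimizes over the nonempty set $D^{\perp_a}\setminus D$; your write-up (like the theorem statement itself) should orient the inclusion this way. Second, your use of Theorem~\ref{intro:T1} for the upper bound $\qmd{P}\leq k+1$ requires $K>1$, so the degenerate case $n=2k$ should be excluded or handled separately.
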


Let $\mtt{I}$ be a lower order ideal in a poset $\mtt{P}$ on $[n]$. 
Let $x\in \F_{q}^n$. 
The {$\mtt{I}$-ball around $x$}, denoted $\mathbb{B}_{\mtt{I}}(x)$, is the set of vectors $y\in \F_q^n$ such that smallest lower order ideal 
containing the indices of the nonzero entries of $y-x$ is contained in $\mtt{I}$. 
A $\mtt{P}$-code $C$ in $\F_q^n$ is called an {\em $\mtt{I}$-perfect $\mtt{P}$-code} if all $\mtt{I}$-balls centered around the codewords of $C$ are disjoint, and 
$\F_{q}^n = \bigsqcup_{x\in C} \mathbb{B}_{\mtt{I}}(x)$.
According to~\cite[Theorem 2.12]{HyunKim2008} a $k$-dimensional linear $\mtt{P}$-code $C\subsetneq \F_q^n$ is an MDS $\mtt{P}$-code if and only if $C$ is an $\mtt{I}$-perfect code for every ideal $\mtt{I}$ of size $n-k$.
We have a quantum analog of this theorem.
\begin{Definition}
Let $Q$ be a stabilizer poset code with parameters $[[n,K,\qmd{P}]]_q$.
Let $D$ denote the associated additive $\F_{q^2}$-code, which was defined previously. 
If the dual code $D^{\perp_a}$ is an $\mtt{I}$-perfect $\mtt{P}$-code, then we call $Q$ an {\em $\mtt{I}$-perfect stabilizer $\mtt{P}$-code}.
\end{Definition}

\begin{Theorem}\label{intro:T5}
Let $\mtt{P}$ be a poset on $[n]$. 
Let $Q$ be a $\mtt{P}$-pure stabilizer poset code with parameters $[[n,K,\qmd{P}]]_q$.
Then $Q$ is an MDS stabilizer $\mtt{P}$-code if and only if the number $\log_q K$ is an integer, and $Q$ is an $\mtt{I}$-perfect stabilizer $\mtt{P}$-code for every lower order ideal $\mtt{I}$ of such that $|\mtt{I}|=n-\log_q |D^{\perp_a}|$.
\end{Theorem}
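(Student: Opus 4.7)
The plan is to combine Theorem~\ref{intro:T3}(1) with an additive-code extension of the classical Hyun--Kim characterization of MDS poset codes \cite{HyunKim2008}. By Theorem~\ref{intro:T3}(1) and the purity assumption on $Q$, the code $Q$ is an MDS stabilizer $\mtt{P}$-code if and only if its associated additive code $D^{\perp_a}$ is an MDS additive $\mtt{P}$-code. In view of the definition of an $\mtt{I}$-perfect stabilizer $\mtt{P}$-code, Theorem~\ref{intro:T5} thus becomes a purely code-theoretic equivalence about $D^{\perp_a}$: it is MDS additive if and only if $\log_q|D^{\perp_a}|$ is an integer and $D^{\perp_a}$ is $\mtt{I}$-perfect for every lower order ideal $\mtt{I}$ with $|\mtt{I}|=n-\log_q|D^{\perp_a}|$. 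The proof is reduced to establishing this equivalence.

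Next, I would adapt the proof of \cite[Theorem 2.12]{HyunKim2008} to the additive setting. The classical argument rests on two ingredients: a Singleton-type bound for $\mtt{P}$-codes, and the cardinality identity $|C|\cdot|\mathbb{B}_{\mtt{I}}(0)| = |\F_q^n|$ that characterizes $\mtt{I}$-perfectness. Both ingredients depend only on the additive group structure of the ambient space (translation invariance of $\mtt{I}$-balls and a counting argument), so the proof goes through once we know that the relevant cardinalities are integer powers of $q$. The integrality hypothesis $\log_q|D^{\perp_a}|\in\Z$ plays exactly this role, replacing the automatic integer dimension enjoyed in the linear case.

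With the adapted theorem in hand, both directions follow quickly. If $D^{\perp_a}$ is MDS additive, then the additive Singleton bound forces $\log_q|D^{\perp_a}|$ to be an integer, and the adapted Hyun--Kim theorem yields $\mtt{I}$-perfectness for every ideal $\mtt{I}$ of the required size. Conversely, if $\log_q|D^{\perp_a}|$ is an integer and $D^{\perp_a}$ is $\mtt{I}$-perfect for every such $\mtt{I}$, then the cardinality identity saturates the additive Singleton bound, and so $D^{\perp_a}$ is MDS additive. Combining with the first paragraph gives the statement of Theorem~\ref{intro:T5}.

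The main obstacle is the second step: proving an additive Singleton bound with the normalization consistent with the parameter $\log_q|D^{\perp_a}|$ used in the statement, and checking that the combinatorial/cardinality arguments of Hyun--Kim extend verbatim from linear to merely additive codes. A secondary technicality is relating $\log_q K$ to $\log_q|D^{\perp_a}|$ via the map $\varphi$ of Theorem~\ref{intro:T2} and the isometry $\psi$, so that the integrality hypothesis on $\log_q K$ translates cleanly into integrality of $\log_q|D^{\perp_a}|$ in the additive picture.
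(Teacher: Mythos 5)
Your proposal is correct and follows essentially the same route as the paper: reduce via Theorem~\ref{intro:T3}(1) to the statement that $D^{\perp_a}$ is an MDS additive $\mtt{P}$-code, invoke the additive extension of the Hyun--Kim characterization (which the paper has already established as Theorem~\ref{T:Additive_MDS_codes}), and translate the integrality of $\log_q|D^{\perp_a}|$ into that of $\log_q K$ using $|D^{\perp_a}| = Kq^n$. The only difference is that you propose to re-derive the additive Hyun--Kim theorem inside the proof, whereas the paper cites it as a previously proved result.
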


We now provide a brief overview of our article. In the following Section~\ref{S:Preliminaries}, we establish our notation and review three chapters in coding theory: classical codes, stabilizer codes, and poset metrics. 
In Section~\ref{S:Additive}, we extend certain results regarding linear poset codes to the domain of additive codes. More specifically, we introduce poset metrics into the theory of additive codes. One of the main results in this section, Theorem~\ref{T:Additive_MDS_codes}, addresses the characterization of MDS additive poset codes in terms of $\mtt{I}$-perfect additive codes. After making another observation on the minimum distances of two nested additive codes, we proceed to our concise Section~\ref{S:Transfer}, where we prove Theorem~\ref{intro:T2}.
The quantum poset code analogue of Singleton's theorem, namely Theorem~\ref{intro:T1} is established in Section~\ref{S:HKS}. Our characterizations of MDS stabilizer poset codes, Theorem~\ref{intro:T3} and Theorem~\ref{intro:T5}, are proved in Section~\ref{S:MDS1}. Finally, to prove our penultimate mentioned result, Theorem~\ref{intro:T4}, we return to the theory of additive poset codes in Section~\ref{S:MDS2}.

\section{Notation and Preliminaries}\label{S:Preliminaries}

Throughout this text, $p$ will represent a prime number. 
Additionally, we will fix a power of $p$, denoted as $q$, defined by $q := p^m$ for some $m\in \Z_+$. 
We will use $\N$ to denote the monoid of nonnegative integers.

As mentioned earlier, the quantum codes of interest in this article are the stabilizer codes. 
The primary objective of our article is to prove a certain inequality that describes the relationship between the parameters of a stabilizer poset code. 
To elucidate the key components of our results, in this preparatory section, we will briefly review:

\begin{enumerate}
\item Classical codes and Singleton's bound.
\item Stabilizer codes.
\item Poset metrics.
\end{enumerate}

\subsection{Classical codes.}\label{SS:Classical}

Let $n\in \Z_+$.
A subset $C$ of $\F_q^n$ is called a {\em ($q$-ary) code of length $n$}.
If $C$ is an $\F_q$-vector subspace of $\F_q^n$, then we call it a {\em linear code of length $n$}.
In this case, the vector space dimension $k:=\dim_{\F_q} C$ is called the {\em dimension of $C$} as a linear code. 
The {\em Hamming metric on $\F_q^n$}, denoted by $\rho_{\mtt{P}_0}$, is defined as follows.
Let $u$ and $v$ be two vectors from $\F_q^n$. 
Then $\rho_{\mtt{P}_0}(u,v)$ is the number of nonzero entries of $u-v$.
In other words, $\rho_{\mtt{P}_0}(u,v)$ is the cardinality of the {\em support} of $u-v$,
that is, the set of indices of the nonzero entries of $u-v$.  
This cardinality is also called the {\em Hamming weight} of $u-v$, denoted $wt_{\mtt{P}_0}(u-v)$.
Then the {\em minimum (Hamming) distance} of $C$ is the number defined by  
\begin{align*}
d_{\mtt{P}_0}(C) := \min_{u\in C} wt_{\mtt{P}_0}(u).
\end{align*}
If the code in question is clear from the context, we will, by a slight abuse of notation, use $d_{\mtt{P}_0}$ instead of $d_{\mtt{P}_0}(C)$. 
All of this can be succinctly expressed by stating that `$C$ is an $[n,k,d_{\mtt{P}_0}]_q$-code.'
\medskip

In~\cite[Theorem 1]{Singleton}, Singleton elucidated a remarkably straightforward yet fundamental relationship between the parameters of a $q$-ary code.
Let $C$ be any nonempty subset of $\F_q^n$.
Let $d_{\mtt{P}_0}$ be the minimum distance of all codewords that are contained in $C$. 
Then Singleton's theorem states that
\begin{align}\label{A:Singleton's}
|C| \leq q^{n-d_{\mtt{P}_0}+1}.
\end{align}
If the inequality in (\ref{A:Singleton's}) is an equality, then $C$ is called a {\em Maximum Distance Separable code} (abbreviated to MDS code).

\begin{Example}
Let $0\leq k \leq q-1$. 
Let $\mathcal{P}_k$ denote the vector space of all polynomials of degree at most $k-1$ in $\F_q[x]$. 
Let $\alpha$ be a generator of $\F_q^*$. 
The {\em Reed-Solomon code of order $k$} is the evaluation code defined by  
\begin{align*}
\mathcal{RS}_k:= \{ (f(1),f(\alpha),\dots, f(\alpha^{q-2}) :\ f \in \mathcal{P}_k\}.
\end{align*} 
It is not difficult to check that the minimum distance of $\mathcal{RS}_k$ is given by $q-k+1$.
Since the dimension of $\mathcal{RS}_k$ is $k$, the Reed-Solomon code is an MDS code.
\end{Example}

\subsection{Stabilizer codes.}\label{SS:CSS}

We fix a primitive $p$-th root of unity, $\xi := e^{2\pi i /p} \in \C$. 
Let $\tr : \F_q \to \F_p$ denote the trace function, $\tr(\alpha) = \sum_{i=0}^m \alpha^{p^i}$. 
This is an $\F_p$-linear map. 
To each $a\in \F_q$, we will associate two unitary operators by using Dirac's bra-ket notation.

First, we fix an orthonormal basis $\{ |x \rangle :\ x\in \F_q \}$ for $\C^q$. 
Then we can apply the field operations of $\F_q$ to the basis elements.
In particular, we have the addition, 
$$
|x\rangle + |a\rangle = |x+a\rangle \quad \text{ for every $\{x,a\}\subseteq \F_q$}.
$$
Now, we define for each $a\in \F_q$ two unitary operators, $X(a):\C^q\to \C^q $ and $Z(a):\C^q\to \C^q$ by linearly extending to $\C^q$ the assignments 
\begin{align*}
X(a)( |x\rangle ) := | x+a\rangle
\qquad\text{and}\qquad
Z(a)( |x\rangle ) := \xi^{\tr(ax)} | x \rangle,\qquad\qquad \text{ where $x\in \F_q$}.
\end{align*}
The set of products, $\mathcal{E}_1 := \{ X(a) Z(b)\ :\ a,b \in \F_q\}$, is called a {\em nice error basis}.
Indeed, it is easy to see that $\mc{E}_1$ is a vector space basis for the space of operators, $\End(\C^q)$.

For $\mathbf{a}:=(a_1,\dots, a_n)\in \F_q^n$, we set 
\begin{align*}
X(\mathbf{a}) := X(a_1)\otimes \cdots \otimes X(a_n) 
\qquad\text{and}\qquad
Z(\mathbf{a}) := Z(a_1)\otimes \cdots \otimes Z(a_n).
\end{align*}
Similarly to the nice error basis $\mathcal{E}_1$ for $\End(\C^q)$, we have a {\em nice error basis for $\bigotimes_{i=1}^n \End(\C^q)$}, defined by 
\begin{align}\label{A:nicebasisforEnd}
\mathcal{E}_n:= \{ X(\mathbf{a}) Z(\mathbf{b}) \ :\ \{\mathbf{a},\mathbf{b}\}\subseteq \F_q^n\}.
\end{align}
Note that, in (\ref{A:nicebasisforEnd}), the product $X(\mathbf{a}) Z(\mathbf{b})$ is understood as $(X(a_1)Z(b_1))\otimes \cdots \otimes (X(a_n)Z(b_n))$. 
The {\em error group of $\mathcal{E}_n$}, denoted by $G_n$, is the finite group defined by  
\begin{align}\label{A:unitaryrep}
G_n := \{ \xi^c X(\mathbf{a}) Z(\mathbf{b}) \ :\ \{\mathbf{a},\mathbf{b}\}\subset \F_q^n,\ c\in \F_p\}.
\end{align} 
It is easy to check that $G_n$ has order $2p^{2m+1}$.
The right hand side of (\ref{A:unitaryrep}) defines naturally a unitary representation for $G_n$.

The elements of $G_n$ are called {\em error operators}. 
Earlier in the introductory section, we introduced the operator $\mtt{P}$-weight for an error operator $g:= \xi^c X(\mathbf{a}) Z(\mathbf{b})$. 
Also, we introduced the symplectic $\mtt{P}$-weight of a pair $(\mathbf{a},\mathbf{b})$. 
It follows from definitions that 
\begin{align*}
\mtt{wt}_{\mtt{P}}(g):= wt_{\Delta\mtt{P}}(\underline{\mathbf{a}\mathbf{b}}).
\end{align*}
We omit the verifications of these these simple facts. 
\medskip

Let $S$ be a subgroup of $G_n$. 
Recall that the {\em stabilizer code associated with $S$} is the quantum code $Q\subseteq (\C^q)^{\otimes n}$ defined by 
$Q:= \bigcap_{g\in S} \{ v\in (\C^q)^{\otimes n} \ | \ gv=v\}$.
In other words, $Q$ is the weight space of weight 1 for the representation of $S$ on $(\C^q)^{\otimes n}$. 
Let $t\in \Z$.
Recall also that a quantum code $Q$ associated with $S$ is said to be {\em $\mtt{P}$-pure to $t$} if $S$ does not contain non-scalar matrices of operator $\mtt{P}$-weight less than $t$. 
If $Q$ is pure to its minimum distance, then $Q$ is said to be {\em $\mtt{P}$-pure}. 
\medskip

Next, we will review a fact that is directly related to the quantum mechanics of our stabilizer codes. 
A quantum code $Q\subset (\C^q)^{\otimes n}$ is said to {\em detect} an error operator $g\in G_n$ if there exists a scalar $\lambda_g \in \C$ 
such that the equality 
\begin{align}\label{A:detectabilitycondition}
\langle c_1 | g | c_2 \rangle = \lambda_g \langle c_1 | c_2 \rangle
\end{align}
holds for every pair of vectors $(c_1,c_2)\in Q\times Q$.
The detectability condition in (\ref{A:detectabilitycondition}) is equivalent to the condition that $P gP=\lambda_g g$, where $P: (\C^q)^{\otimes n} \to Q$ is the orthogonal projection on $Q$. 
We proceed with the assumption that $Q$ is a stabilizer code such that $\dim_\C Q >1$. 
Let $S$ be the stabilizer group of $Q$ in $G_n$. 
Then $Q$ detects an error operator $g\in G_n$ if and only if either $g$ is an element of the semidirect product $Z(G_n)\rtimes S$,
where $Z(G_n)$ is the center of $G_n$, or it does not belong to the centralizer of $S$ in $G_n$, denoted $C_{G_n}(S)$. 
This fact was proven by Ashikhmin and Knill in~\cite[Theorem 3]{AK2001}. 
We note in passing the following well-known facts regarding the stabilizer groups: 
\begin{enumerate}
\item $S$ is an abelian group.
\item $S\cap Z(G_n) = \{1\}$. 
\item $C_{G_n}(S)$ contains $Z(G_n)\rtimes S$. 
\end{enumerate}


\subsection{Poset metrics.}\label{SS:Poset}

Posets can be used to construct interesting error correcting codes in a variety of ways. 
For example, Can and Hibi~\cite{CanHibi2023} use the order polytopes of posets to construct toric codes. 
In this work, we use posets to define a new metric on already existing codes, rather than to construct the underlying vector space.
To begin, we will review the basic theory of posets and poset metrics. 
For more details, we recommend the recent monograph~\cite{FAPP} by Firer, Alves, Panek, and Pinheiro.

A poset is a set with a partial order relation, which we denote by $\leq$. 
In this manuscript, we consider posets on the set $[n]$, which means that the underlying set of the poset is the set $\{1,2,\dots,n\}$.
Let $\mtt{P}$ be a poset on $[n]$. 
A cover relation in $\mtt{P}$ is an ordered pair $(i,j)$ of elements of $\mtt{P}$ such that $i\lneq j$ and $i\leq c  \leq j$ implies that $c = i$ or $c=j$.
The {\em Hasse diagram of $\mtt{P}$}, denoted $\mathscr{H}(\mtt{P})$, is the directed acyclic graph whose vertex set is $\mtt{P}$
and the edge set of $\ms{H}(\mtt{P})$ consists of the cover relations in $\mtt{P}$. 
We implicitly assume that the orientation of the graph is such that for any edge $(i,j)$ of $\ms{H}(\mtt{P})$, we have $i<j$.
A {\em subposet} of $\mtt{P}$ is a subset $\mtt{A}\subseteq \mtt{P}$ and the order relation on $\mtt{A}$ is the restriction of 
the order relation $\leq$ of $\mtt{P}\times \mtt{P}$ to $\mtt{A}\times \mtt{A}$. 
Note that the restriction of the order relation $\leq$ to $\mtt{A}\times \mtt{A}$ is also denoted by $\leq$.

Every poset has a special family of subposets called {\em lower order ideals}, in analogy with the terminology of ring theory.
Let $\mtt{A}$ be a subset of $\mtt{P}$. 
A convenient notation for the set $\{ x\in \mtt{P}:\ x \leq a \text{ for some } a \in \mtt{A} \}$ is $\langle a: a\in \mtt{A} \rangle$.
This set, viewed as a subposet of $\mtt{P}$, is called the {\em lower order ideal generated by $\mtt{A}$ in $\mtt{P}$}.
Notice that $\mtt{A}$ is the smallest lower order ideal containing $\mtt{A}$. 

Next, we define the ``poset weight function.''
Let $\mbf{c}:=(c_1,\dots, c_n)\in \F_q^n$. 
The {\em support ideal} of $\mbf{c}$, denoted $\mtt{P}_{\mbf{c}}$, is the lower order ideal generated by the indices of the nonzero entries of $\mbf{c}$. 
In other words, the support ideal of $\mbf{c}$ is given by 
\begin{align*}
\mtt{P}_{\mbf{c}} := \langle i\in \mtt{P}:\  c_i\neq 0 \rangle. 
\end{align*}
Then the {\em $\mtt{P}$-weight of $\mbf{c}$} is the natural number defined by 
$wt_{\mtt{P}}(\mbf{c}):= |\mtt{P}_{\mbf{c}} |$.

\begin{Example}
Let $\mtt{P}$ denote the poset defined on the set $[8]$ defined by its Hasse diagram depicted in Figure~\ref{F:supportideal}.
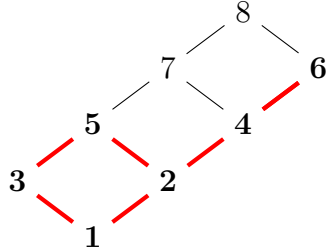
\begin{figure}[htp]
\begin{center}
\begin{tikzpicture}[scale=0.5]
  \node (24) at (0,0) {$8$};
  \node (12) at (-2,-1.5) {$7$};
  \node (8) at (2,-1.5) {{$\mbf{6}$}};
  \node (6) at (-4,-3) {{$\mbf{5}$}};
  \node (4) at (0,-3) {{$\mbf{4}$}};
  \node (3) at (-6,-4.5) {{$\mbf{3}$}};
  \node (2) at (-2,-4.5) {{$\mbf{2}$}};
  \node (1) at (-4,-6) {{$\mbf{1}$}};

  \draw (24) -- (12);
  \draw (12) -- (6);
   \draw[ultra thick, red] (6) -- (3);
  \draw[ultra thick, red] (3) -- (1);
  \draw (24) -- (8); 
  \draw[ultra thick, red] (8) -- (4) -- (2) -- (1);
  \draw (12) -- (4);
  \draw[ultra thick, red] (8) -- (4);
  \draw[ultra thick, red] (6) -- (2);
\end{tikzpicture}
\caption{The Hasse diagram of a poset on $[8]$.}
\label{F:supportideal}
\end{center}
\end{figure}

Let $\mbf{c}:=(0,0,0,0,1,1,0,0) \in \F_q^8$.
Then we have $\mtt{P}_{\mbf{c}}=\langle5,6\rangle= \{ 1,2,3,4,5,6 \}$.
Hence, the $\mtt{P}$-weight of $\mbf{c}$ is given by $wt_{\mtt{P}} (\mbf{c}) =| \{ 1,2,3,4,5,6 \}|=6$.

\end{Example}

The idea of using poset metrics was discovered by Brualdi, Graves, and Lawrance in~\cite{BGL}. 

\begin{Definition}
Given a poset $\mtt{P}$ on $[n]$, the {\em poset metric on $\F_q^n$} is defined by
\begin{align*}
\rho_{\mtt{P}} (\mbf{u}, \mbf{v}) := wt_{\mtt{P}}(\mbf{u} - \mbf{v}) \quad (\mbf{u}, \mbf{v} \in \F_q^n).
\end{align*}
A {\em poset code of length $n$} is a subset $C \subset \F_q^n$ together with the restriction of the poset metric $\rho_{\mtt{P}}$ to $C$.
In this case, we say that $C$ is a {\em $\mtt{P}$-code of length $n$}.
The {\em minimum weight} of a $\mtt{P}$-code $C$ is 
$$
d_{\mtt{P}}(C) :=\min \{ wt_{\mtt{P}}(\mbf{u})\mid \mbf{u} \in \F_q^n\setminus \{0\}\}.
$$
Note: Later in this article we will use the minimum weight of a code $D\subseteq \F_q^{2n}$ with respect to the symplectic $\mtt{P}$-weight, $wt_{\Delta \mtt{P}}$
that is defined in the introduction.
This minimum weight is defined as $d_{\Delta\mtt{P}}(D) :=\min \{wt_{\Delta\mtt{P}}(\underline{\mbf{a}\mbf{b}}) \mid  \underline{\mbf{a}\mbf{b}} \in \F_q^{2n}\setminus \{0\}\}$.
\end{Definition}

\begin{Example}
If $\mtt{P}_0$ denotes the antichain poset on $[n]$, then $\rho_{\mtt{P}_0}$ is nothing but the Hamming distance on $\F_q^n$. Hence, every classical code in $\F_q^n$ is a $\mtt{P}_0$-code of length $n$.
\end{Example}

Many important properties of ordinary error-correcting codes naturally extend to the setting of poset codes. 
For example, in~\cite[Proposition 2.1]{HyunKim2008}, Hyun and Kim generalized the Singleton's theorem (\ref{A:Singleton's}) to poset codes.
We will refer to this result as the {\em Hyun-Kim-Singleton Theorem}.
\begin{Lemma}\label{L:HKSbound}
Let $\mtt{P}$ be a poset on $[n]$.
Let $C$ be a nonempty subset of $\F_q^n$. 
Then we have 
\begin{align}\label{A:HyunKimSingleton}
|C| \leq q^{n-d_{\mtt{P}}(C)+1}.
\end{align}
\end{Lemma}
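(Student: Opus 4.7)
The plan is to generalize the classical puncturing argument used in the proof of Singleton's bound from the Hamming weight to the $\mtt{P}$-weight. Let $d := d_{\mtt{P}}(C)$. If $|C|=1$ the inequality (\ref{A:HyunKimSingleton}) is trivial, so we may assume $|C|\geq 2$; in particular $d\geq 1$, interpreted as the minimum $\mtt{P}$-weight of $\mbf{u}-\mbf{v}$ over distinct pairs $\mbf{u},\mbf{v}\in C$ (which is what makes the bound substantive and agrees with the stated minimum weight when $C$ is linear).

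The first step is to exhibit a lower order ideal $\mtt{I}\subseteq \mtt{P}$ of cardinality exactly $d-1$. This is always possible: pick any linear extension $i_{1},\dots,i_{n}$ of $\mtt{P}$ and set $\mtt{I}:=\{i_{1},\dots,i_{d-1}\}$. By the defining property of a linear extension, whenever $j\leq i_{\ell}$ in $\mtt{P}$ with $\ell\leq d-1$, we have $j=i_{k}$ for some $k\leq \ell\leq d-1$, so $\mtt{I}$ is downward-closed in $\mtt{P}$.

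The second step is to analyze the coordinate-erasure map $\pi_{\mtt{I}}\colon \F_{q}^{n}\to \F_{q}^{[n]\setminus \mtt{I}}$ and to show that its restriction to $C$ is injective. Suppose, for a contradiction, that $\mbf{u},\mbf{v}\in C$ are distinct with $\pi_{\mtt{I}}(\mbf{u})=\pi_{\mtt{I}}(\mbf{v})$. Then the support of $\mbf{w}:=\mbf{u}-\mbf{v}$ lies in $\mtt{I}$. Because $\mtt{I}$ is already a lower order ideal of $\mtt{P}$ containing the support of $\mbf{w}$, the minimality in the definition of the support ideal forces $\mtt{P}_{\mbf{w}}\subseteq \mtt{I}$; hence $wt_{\mtt{P}}(\mbf{w})=|\mtt{P}_{\mbf{w}}|\leq |\mtt{I}|=d-1$, contradicting the definition of $d$.

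Injectivity of $\pi_{\mtt{I}}|_{C}$ now immediately yields $|C|=|\pi_{\mtt{I}}(C)|\leq q^{n-d+1}$, which is (\ref{A:HyunKimSingleton}). The only spot at which posets intervene nontrivially — and hence the one place the argument departs from the classical Hamming proof — is the requirement that the erased coordinates form a lower order ideal, for this is precisely what keeps the support ideal of any difference $\mbf{u}-\mbf{v}$ trapped inside $\mtt{I}$. The existence of a lower order ideal of each size between $0$ and $n$, supplied by linear extensions, is therefore the sole extra ingredient needed beyond the classical reasoning, and I expect no real obstacle in carrying the plan out.
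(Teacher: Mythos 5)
Your proof is correct. The paper does not prove this lemma itself but cites Hyun--Kim, whose argument runs through the $\mtt{I}$-ball decomposition recorded in Lemma~\ref{L:4parts}; your puncturing map $\pi_{\mtt{I}}$ for an ideal $\mtt{I}$ of size $d-1$ has exactly the $\mtt{I}$-balls as its fibers, so the two arguments coincide, and you correctly read $d_{\mtt{P}}(C)$ for a nonempty (possibly nonlinear) $C$ as the minimum $\mtt{P}$-weight of differences of distinct codewords, which is the interpretation the bound requires.
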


Following Hyun and Kim, we call a $\mtt{P}$-code $C$ in $\F_q^n$ an {\em MDS $\mtt{P}$-code} if $|C|$ meets the upper bound in (\ref{A:HyunKimSingleton}). 

\begin{Definition}
Let $C$ be a subset of $\F_q^n$. The {\em $q$-dimension} of $C$ is the real number $\log_q |C|$. 
\end{Definition} 

\begin{Remark}\label{R:qdimisinteger}
The $q$-dimension of a code need not be an integer, and can even be negative for non-linear codes.
The $q$-dimension of a linear code in $\F_q^n$ agrees with its $\F_q$-vector space dimension. 
For a non-linear MDS $\mtt{P}$-code in $\F_q^n$, the $q$-dimension is still an integer, namely $n-d_{\mtt{P}}(C)+1$.
\end{Remark}

At the center of the theory of MDS $\mtt{P}$-codes is the notion of an $\mtt{I}$-ball, where $\mtt{I}$ stands for a lower order ideal of $\mtt{P}$.
\begin{Definition}
Let $\mtt{P}$ be a poset on $[n]$. Let $\mtt{I}$ be a lower order ideal in $\mtt{P}$. 
Let $\mbf{u}\in \F_{q}^n$. 
The {$\mtt{I}$-ball around $\mbf{u}$} is defined as the following set:
$$
\mathbb{B}_{\mtt{I}}(\mbf{u}):= \{ \mbf{v}\in \F_q^n\ :\  \mtt{P}_{\mbf{y}-\mbf{v}}\subseteq \mtt{I} \}.
$$
If $\mbf{u}$ is the zero-vector, then we write $\mathbb{B}_{\mtt{I}}$ instead of $\mathbb{B}_{\mtt{I}}(\mbf{u})$.
\end{Definition}

\begin{Lemma}(See \cite[Proposition 2.7]{HyunKim2008})\label{L:4parts}
Let $\mtt{P}$ be a poset on $[n]$. Let $\mtt{I}$ be a lower order ideal in $\mtt{P}$. 
Then the following assertions hold: 
\begin{enumerate}
\item[(a)] $\mathbb{B}_{\mtt{I}}$ is a vector subspace of $\F_q^n$ of dimension $|\mtt{I}|$. 
\item[(b)] For $\mbf{u}\in \F_q^n$, $\mathbb{B}_{\mtt{I}}(\mbf{u})$ is the coset of $\mathbb{B}_{\mtt{I}}$ containing $\mbf{u}$, that is, $\mathbb{B}_{\mtt{I}}(\mbf{u}) = \mbf{u}+ \mathbb{B}_{\mtt{I}}$.
\item[(c)] For $\mbf{u},\mbf{v}\in \F_q^n$, the $\mtt{I}$-balls $\mathbb{B}_{\mtt{I}}(\mbf{u})$ and $\mathbb{B}_{\mtt{I}}(\mbf{v})$ are either identical or disjoint. 
Moreover, 
$$
\mathbb{B}_{\mtt{I}}(\mbf{u}) = \mathbb{B}_{\mtt{I}}(\mbf{v}) \iff \mtt{P}_{\mbf{u}-\mbf{v}}\subseteq \mtt{I}.
$$
\item[(d)] The full space $\F_q^n$ can be written as a disjoint union of $\mtt{I}$-balls.
\end{enumerate}
\end{Lemma}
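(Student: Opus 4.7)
The plan is to reduce the whole statement to part (a), after which parts (b), (c), and (d) follow from the elementary theory of cosets of a subspace.

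For part (a), I would first establish a cleaner description of $\mathbb{B}_{\mtt{I}}$: namely, $\mathbb{B}_{\mtt{I}}$ coincides with the set of vectors $\mbf{v}\in \F_q^n$ whose set of nonzero coordinate indices (the ordinary support of $\mbf{v}$) is contained in $\mtt{I}$. One direction is automatic, since the ordinary support of $\mbf{v}$ lies inside $\mtt{P}_{\mbf{v}}$. For the converse, if the ordinary support is contained in $\mtt{I}$, then, because $\mtt{I}$ is a \emph{lower order ideal}, the lower ideal $\mtt{P}_{\mbf{v}}$ generated by that support is still contained in $\mtt{I}$. This is the only place in the proof where the assumption that $\mtt{I}$ is a lower order ideal, rather than an arbitrary subset of $\mtt{P}$, is used, and it is the one step that a careful proof should not gloss over. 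Once this reformulation is in hand, closure of $\mathbb{B}_{\mtt{I}}$ under addition and $\F_q$-scalar multiplication is immediate from the fact that the ordinary support of a linear combination is contained in the union of supports. Exhibiting $\{ e_i : i\in \mtt{I}\}$ as a basis then gives $\dim_{\F_q} \mathbb{B}_{\mtt{I}} = |\mtt{I}|$.

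For part (b), a vector $\mbf{v}$ belongs to $\mathbb{B}_{\mtt{I}}(\mbf{u})$ precisely when $\mtt{P}_{\mbf{v}-\mbf{u}}\subseteq \mtt{I}$, which by the very definition means $\mbf{v}-\mbf{u}\in \mathbb{B}_{\mtt{I}}$, i.e., $\mbf{v}\in \mbf{u}+\mathbb{B}_{\mtt{I}}$. Hence every $\mtt{I}$-ball is a coset of the subspace $\mathbb{B}_{\mtt{I}}$.

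Parts (c) and (d) are then pure linear algebra. Two cosets of a subspace are either equal or disjoint, and the cosets partition the ambient vector space; applying this to $\mathbb{B}_{\mtt{I}}\subseteq \F_q^n$ yields both (c) and (d). The equality criterion $\mathbb{B}_{\mtt{I}}(\mbf{u}) = \mathbb{B}_{\mtt{I}}(\mbf{v}) \iff \mbf{u}-\mbf{v} \in \mathbb{B}_{\mtt{I}} \iff \mtt{P}_{\mbf{u}-\mbf{v}}\subseteq \mtt{I}$ uses nothing beyond (a) and (b). In summary, there is no genuine obstacle; the proof is essentially bookkeeping, with the single conceptual input being that a lower order ideal is closed under taking predecessors, which makes $\mathbb{B}_{\mtt{I}}$ a coordinate subspace.
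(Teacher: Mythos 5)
Your proof is correct, and it is the standard argument: the paper itself offers no proof of this lemma, deferring instead to \cite[Proposition 2.7]{HyunKim2008}, where the same reduction is made --- identifying $\mathbb{B}_{\mtt{I}}$ with the coordinate subspace spanned by $\{e_i : i\in \mtt{I}\}$ (using that $\mtt{I}$ is a lower order ideal, exactly the step you rightly flag) and then deducing (b)--(d) from the coset structure. Nothing further is needed.
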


\section{Additive Poset Codes}\label{S:Additive}

In this section, we bring poset metrics to a setting of additive codes.
Here, by an additive code, we mean a subgroup $D$ of either $\F_q^{2n}$ or $\F_{q^2}^n$. 
This idea is the main engine of our paper since most results about stabilizer codes are obtained through additive codes.
Note that additive codes need not be $\F_q$- or $\F_{q^2}$-linear subspaces, but they are always $\F_p$-linear in their ambient vector spaces. 
Following~\cite{Huffman2013}, we refer to these codes as {\em $\F_p$-linear $\F_q$-codes of length $2n$} or {\em $\F_p$-linear $\F_{q^2}$-codes of length $n$}, respectively.
Nevertheless, to simplify terminology, in some places in the text, we simply write``$D$ is an additive code.''
\medskip

Recall from the introduction that the symplectic $\mtt{P}$-weight on $\F_q^{2n}$ is defined by $wt_{\Delta\mtt{P}} (\underline{\mathbf{a}\mathbf{b}}) := | \mtt{P}_{\mbf{a}}\cup \mtt{P}_{\mbf{b}} |$.
Let $\rho_{\Delta \mtt{P}} : \F_q^{2n}\times \F_q^{2n} \to \N$ be the map defined by 
$$
\rho_{\Delta\mtt{P}} ( \underline{\mbf{a}\mbf{b}}, \underline{\mbf{a}'\mbf{b}'} ) :=
wt_{\Delta\mtt{P}}( \underline{\mbf{a}\mbf{b}}-\underline{\mbf{a}'\mbf{b}'} )\qquad 
\text{ for every } \underline{\mbf{a}\mbf{b}}, \underline{\mbf{a}'\mbf{b}'} \in \F_q^{2n}.
$$
We claim that $\rho_{\Delta \mtt{P}}$ is a metric on $\F_q^{2n}$.
Indeed, for $\rho_{\Delta\mtt{P}}$, the symmetry and the positive-definiteness are easy to check. 
The fact that it satisfies the triangle inequality is a direct consequence of the following lemma. 
\begin{Lemma}
Let $\mtt{P}$ be a poset on $[n]$.
Then for every $\underline{\mbf{a}\mbf{b}}$ and $\underline{\mbf{a}'\mbf{b}'}$ from $\F_q^{2n}$, we have 
$$
wt_{\Delta \mtt{P}}( \underline{\mbf{a}\mbf{b}}+\underline{\mbf{a}'\mbf{b}'} ) \leq wt_{\Delta \mtt{P}}( \mbf{a}\mbf{b})+wt_{\Delta \mtt{P}}(\underline{\mbf{a}'\mbf{b}'}).
$$
\end{Lemma}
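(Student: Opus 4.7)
The plan is to unwind the definition of $wt_{\Delta \mtt{P}}$ and reduce the claim to two easy set-theoretic facts: that lower order ideal generation is monotone and distributes over unions, and that cardinality is subadditive on unions.

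First, I would record the elementary observation that for any $\mbf{c}, \mbf{c}' \in \F_q^n$, the support of $\mbf{c} + \mbf{c}'$ in the usual sense is contained in $\mt{supp}(\mbf{c}) \cup \mt{supp}(\mbf{c}')$, because a coordinate where both vectors vanish is still zero after addition. Since the operation $\mtt{A} \mapsto \langle a : a \in \mtt{A} \rangle$ is inclusion-preserving, and since clearly $\langle \mtt{A} \cup \mtt{B} \rangle = \langle \mtt{A} \rangle \cup \langle \mtt{B} \rangle$ for any $\mtt{A}, \mtt{B} \subseteq \mtt{P}$, I obtain the inclusion
\[
\mtt{P}_{\mbf{c} + \mbf{c}'} \;\subseteq\; \mtt{P}_{\mbf{c}} \cup \mtt{P}_{\mbf{c}'}.
\]

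Applying this observation separately to the two halves of the pairs, I get $\mtt{P}_{\mbf{a}+\mbf{a}'} \subseteq \mtt{P}_{\mbf{a}} \cup \mtt{P}_{\mbf{a}'}$ and $\mtt{P}_{\mbf{b}+\mbf{b}'} \subseteq \mtt{P}_{\mbf{b}} \cup \mtt{P}_{\mbf{b}'}$. Taking the union and rearranging yields
\[
\mtt{P}_{\mbf{a}+\mbf{a}'} \cup \mtt{P}_{\mbf{b}+\mbf{b}'} \;\subseteq\; \bigl( \mtt{P}_{\mbf{a}} \cup \mtt{P}_{\mbf{b}} \bigr) \cup \bigl( \mtt{P}_{\mbf{a}'} \cup \mtt{P}_{\mbf{b}'} \bigr).
\]
Passing to cardinalities and using the trivial inequality $|A \cup B| \leq |A| + |B|$, the left side equals $wt_{\Delta\mtt{P}}(\underline{\mbf{a}\mbf{b}} + \underline{\mbf{a}'\mbf{b}'})$ while the right side is bounded by $wt_{\Delta\mtt{P}}(\underline{\mbf{a}\mbf{b}}) + wt_{\Delta\mtt{P}}(\underline{\mbf{a}'\mbf{b}'})$, which is exactly the claim.

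There is no serious obstacle here; the argument is essentially bookkeeping. The one point worth flagging is that the symplectic weight pairs $\mbf{a}$ with $\mbf{b}$ via a union of their support ideals rather than treating them independently, so one must pair up the ideal inclusions appropriately before taking cardinalities, rather than trying to prove the inequality coordinate-by-coordinate.
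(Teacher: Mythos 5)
Your proof is correct and follows essentially the same route as the paper: establish $\mtt{P}_{\mbf{c}+\mbf{c}'}\subseteq \mtt{P}_{\mbf{c}}\cup\mtt{P}_{\mbf{c}'}$, combine the two halves, and pass to cardinalities via subadditivity of $|\cdot|$ on unions. If anything, your write-up is slightly more careful than the paper's, which records the containment $\mtt{P}_{\mbf{a}+\mbf{a}'}\cup\mtt{P}_{\mbf{b}+\mbf{b}'}\subseteq\mtt{P}_{\mbf{a}}\cup\mtt{P}_{\mbf{a}'}\cup\mtt{P}_{\mbf{b}}\cup\mtt{P}_{\mbf{b}'}$ as an equality in its displayed chain.
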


\begin{proof}
Let $(a_1,\dots, a_n)$ (resp. $(a_1',\dots, a_n'),(b_1,\dots, b_n),(b_1',\dots, b_n')$) be the coordinates of $\mbf{a}$ (resp. of $\mbf{a}'$, 
$\mbf{b}$, $\mbf{b}'$).
Recall that $\mtt{P}_{\mbf{a}}$ denotes the lower order ideal generated by $\{i\in [n] \mid a_i\neq 0\}$ in $\mtt{P}$. 
It is easy to check from this definition that 
$$
\mtt{P}_{\mbf{a}+\mbf{a}'}\subseteq \mtt{P}_{\mbf{a}}  \cup \mtt{P}_{\mbf{a}'} \quad \text{ and }\quad 
\mtt{P}_{\mbf{b}+\mbf{b}'}\subseteq \mtt{P}_{\mbf{b}}  \cup \mtt{P}_{\mbf{b}'}.
$$
Then we see that 
\begin{align*}
wt_{\Delta \mtt{P}} (\underline{\mbf{a}\mbf{b}}+\underline{\mbf{a}'\mbf{b}'} ) &= 
wt_{\Delta \mtt{P}} (\underline{(\mbf{a}+\mbf{a}')(\mbf{b}+\mbf{b}')}) \\
&= | \mtt{P}_{\mbf{a}+\mbf{a}'}\cup \mtt{P}_{\mbf{b}+\mbf{b}'}| \\
&= |\mtt{P}_{\mbf{a}}  \cup \mtt{P}_{\mbf{a}'}  \cup \mtt{P}_{\mbf{b}}  \cup \mtt{P}_{\mbf{b}'}|\\ 
&\leq |\mtt{P}_{\mbf{a}} \cup \mtt{P}_{\mbf{b}}|+ | \cup \mtt{P}_{\mbf{a}'} \cup \mtt{P}_{\mbf{b}'}| \\
&= wt_{\Delta \mtt{P}}( \mbf{a}\mbf{b})+wt_{\Delta \mtt{P}}(\underline{\mbf{a}'\mbf{b}'}).
\end{align*}
This finishes the proof of our assertion.
\end{proof}

We fix a vector space basis $\{1,\gamma\}$ for $\F_{q^2}$ over $\F_q$.
It is straightforward to check that the map $\psi: \F_{q^2}^n\to \F_q^{2n}$,
$\mbf{a}1 + \mbf{b} \gamma \mapsto \underline{\mbf{a}\mbf{b}}$,
which we mentioned in the introduction, is an $\F_q$-linear isomorphism. 
It also preserves poset related metrics.

\begin{Lemma}\label{L:isometry}
The map $\psi: \F_{q^2}^n \to \F_q^{2n}$, $ \mathbf{a}+ \gamma \mathbf{b}\mapsto \underline{\mathbf{a}\mathbf{b}}$ is an $\F_q$-linear isometry between $(\F_{q^2}^n,\rho_{\mtt{P}})$ and $(\F_{q}^{2n}, \rho_{\Delta\mtt{P}})$,
where $\rho_{\Delta\mtt{P}}$ is the metric defined by the symplectic $\mtt{P}$-weight $wt_{\Delta\mtt{P}}$ and $\rho_{\mtt{P}}$ is the ordinary $\mtt{P}$-metric.
\end{Lemma}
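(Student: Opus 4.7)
The plan is to prove this in two stages: first establish that $\psi$ is $\F_q$-linear and bijective, and then reduce the isometry condition to a pointwise weight identity $wt_{\mtt{P}}(\mathbf{v}) = wt_{\Delta\mtt{P}}(\psi(\mathbf{v}))$ for every $\mathbf{v}\in \F_{q^2}^n$. The linearity and bijectivity are essentially formal: since $\{1,\gamma\}$ is an $\F_q$-basis of $\F_{q^2}$, every $\mathbf{v}\in \F_{q^2}^n$ has a unique decomposition $\mathbf{v}=\mathbf{a}1+\mathbf{b}\gamma$ with $\mathbf{a},\mathbf{b}\in \F_q^n$, so coordinate-wise extraction of the two halves gives an $\F_q$-linear bijection $\F_{q^2}^n\to \F_q^n\oplus \F_q^n\cong \F_q^{2n}$.

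Next, once linearity is in hand, the isometry claim for the metrics $\rho_{\mtt{P}}$ and $\rho_{\Delta\mtt{P}}$ reduces by translation-invariance of both metrics to showing $wt_{\mtt{P}}(\mathbf{v}) = wt_{\Delta\mtt{P}}(\psi(\mathbf{v}))$ for all $\mathbf{v}$. Indeed, once this is established, applying it to $\mathbf{v}-\mathbf{w}$ and using $\psi(\mathbf{v}-\mathbf{w}) = \psi(\mathbf{v}) - \psi(\mathbf{w})$ yields $\rho_{\mtt{P}}(\mathbf{v},\mathbf{w}) = \rho_{\Delta\mtt{P}}(\psi(\mathbf{v}),\psi(\mathbf{w}))$.

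To prove the pointwise weight identity, write $\mathbf{v} = \mathbf{a}1+\mathbf{b}\gamma$ with coordinates $v_i=a_i+b_i\gamma$. Since $\{1,\gamma\}$ is $\F_q$-linearly independent in $\F_{q^2}$, we have $v_i=0$ if and only if both $a_i=0$ and $b_i=0$; equivalently, $v_i\neq 0$ iff $a_i\neq 0$ or $b_i\neq 0$. Hence the ordinary supports satisfy
\[
\mt{supp}(\mathbf{v}) = \mt{supp}(\mathbf{a})\cup \mt{supp}(\mathbf{b}).
\]
I would then observe that the lower order ideal generated by a union of two sets equals the union of the two ideals they generate (this is immediate from the definition $\langle a:a\in \mtt{A}\rangle$). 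Therefore
\[
\mtt{P}_{\mathbf{v}} = \langle i: v_i\neq 0\rangle = \langle i: a_i\neq 0\rangle \cup \langle i: b_i\neq 0\rangle = \mtt{P}_{\mathbf{a}}\cup \mtt{P}_{\mathbf{b}},
\]
which gives $wt_{\mtt{P}}(\mathbf{v}) = |\mtt{P}_{\mathbf{a}}\cup \mtt{P}_{\mathbf{b}}| = wt_{\Delta\mtt{P}}(\underline{\mathbf{a}\mathbf{b}})$ by definition of the symplectic $\mtt{P}$-weight.

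There is no real obstacle here; the only nontrivial ingredient is the passage from supports to the generated ideals, and that is a one-line verification from the definition of $\langle\, \cdot\, \rangle$. The entire proof is a straightforward unwinding of the definitions, hinging on the $\F_q$-linear independence of $\{1,\gamma\}$.
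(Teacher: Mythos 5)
Your proof is correct and follows essentially the same route as the paper's: both reduce the isometry to the pointwise weight identity via linearity, and both hinge on the observation that $v_i = a_i + b_i\gamma \neq 0$ iff $a_i\neq 0$ or $b_i\neq 0$, whence $\mtt{P}_{\mathbf{v}} = \mtt{P}_{\mathbf{a}}\cup\mtt{P}_{\mathbf{b}}$. Your version merely makes explicit two steps the paper leaves implicit (the translation-invariance reduction and the fact that the ideal generated by a union is the union of the generated ideals), which is fine.
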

\begin{proof}
The fact that $\psi$ is an $\F_q$-linear isomorphism is evident. 
To show that it is an isometry we will show that it preserves weights. 
Let $\mbf{x}$ be a vector from $\F_{q^2}^n$. 
We write $\mbf{x}$ in the form $\mbf{a}1 + \mbf{b} \gamma \in \F_{q^2}^n$.
Let us express $\mbf{a}$ and $\mbf{b}$ in coordinates by $(a_1,\dots, a_n)\in \F_q^n$ and $(b_1,\dots, b_n)\in \F_q^n$, respectively. 
Then the coordinates of $\mbf{x}$ are given by 
$$
\mbf{x}:=(x_1,\dots, x_n) = (a_1 1 + b_1\gamma, \dots, a_n 1 + b_n \gamma).
$$
Let $i\in [n]$. 
Clearly, the $i$-th coordinate of $\mbf{x}$ is nonzero if and only if either $a_i\neq 0$ or $b_i\neq 0$. 
This means that the lower order ideal $\mtt{P}_{\mbf{x}}$ is given by the union of the lower order ideals,
$$
\mtt{P}_{\mbf{x}} = \mtt{P}_{\mbf{a}}\cup \mtt{P}_{\mbf{b}}.
$$
In other words, the $\mtt{P}$-weight of $\mbf{x}$ is exactly the symplectic $\mtt{P}$-weight of $\underline{\mbf{a}\mbf{b}}$ in $\F_q^{2n}$. 
This finishes the proof of our assertion.
\end{proof}

We proceed to extend the definition of an $\mtt{I}$-perfect code to the setting of additive codes.

\begin{Definition}
Let $\mtt{I}$ be an ideal of a poset $\mtt{P}$ on $[n]$. 
An additive code $D\subseteq \F_q^n$ is called {\em $\mtt{I}$-perfect} if the $\mtt{I}$-balls centered around the codewords of $D$ are pairwise disjoint, and their union is $\F_q^n$. 
\end{Definition}

A careful examination of the proof of~\cite[Theorem 2.12]{HyunKim2008} reveals the fact that the corresponding assertion for MDS additive codes holds true with an additional assumption on the cardinality of the code. 
\begin{Theorem}\label{T:Additive_MDS_codes}
Let $\mtt{P}$ be a poset on $[n]$.
Let $D$ be an additive code in $\F_q^n$. 
Then $D$ is an additive MDS $\mtt{P}$-code if and only if the following two conditions are satisfied:
\begin{enumerate}
\item[(1)] $\log_q |D|$ is an integer, and
\item[(2)] $D$ is an $\mtt{I}$-perfect code for every ideal $\mtt{I}$ of size $n-\log_q |D|$. 
\end{enumerate}
\end{Theorem}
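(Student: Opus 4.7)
The plan is to mimic the proof of the original linear version, Theorem 2.12 of Hyun--Kim, while tracking carefully where the extra hypothesis (1) is needed; in the linear setting $|D|=q^{\dim_{\F_q} D}$ automatically, but for an $\F_p$-linear additive code $|D|=p^r$ need not be a power of $q$, so integrality must be separately assumed.

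For the forward direction, assume $D$ is an additive MDS $\mtt{P}$-code. By definition the Hyun--Kim--Singleton bound in Lemma~\ref{L:HKSbound} is attained, so $|D|=q^{n-d_{\mtt{P}}(D)+1}$, which immediately makes $\log_q |D|=n-d_{\mtt{P}}(D)+1$ an integer; call it $k$. Fix an ideal $\mtt{I}$ of size $n-k=d_{\mtt{P}}(D)-1$. If two codewords $\mbf{u},\mbf{v}\in D$ satisfied $\mathbb{B}_{\mtt{I}}(\mbf{u})=\mathbb{B}_{\mtt{I}}(\mbf{v})$, then by Lemma~\ref{L:4parts}(c) we would have $\mtt{P}_{\mbf{u}-\mbf{v}}\subseteq\mtt{I}$, hence $wt_{\mtt{P}}(\mbf{u}-\mbf{v})\leq n-k<d_{\mtt{P}}(D)$; since $D$ is additive $\mbf{u}-\mbf{v}\in D$, forcing $\mbf{u}=\mbf{v}$. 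Thus the $\mtt{I}$-balls centered at codewords of $D$ are pairwise disjoint; by Lemma~\ref{L:4parts}(a) each has size $q^{|\mtt{I}|}=q^{n-k}$, so the union has cardinality $|D|\,q^{n-k}=q^{k}q^{n-k}=q^n=|\F_q^n|$, giving a partition and establishing $\mtt{I}$-perfectness.

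For the converse, assume (1) and (2) and set $k:=\log_q|D|\in\Z$. The Hyun--Kim--Singleton bound gives $d_{\mtt{P}}(D)\leq n-k+1$, so it suffices to rule out $d_{\mtt{P}}(D)\leq n-k$. Suppose, for contradiction, that there is a nonzero $\mbf{c}\in D$ with $|\mtt{P}_{\mbf{c}}|=wt_{\mtt{P}}(\mbf{c})\leq n-k$. The key auxiliary fact is purely order-theoretic: any lower order ideal $\mtt{J}\subsetneq\mtt{P}$ can be enlarged to an ideal of any prescribed larger cardinality (up to $n$) by repeatedly adjoining a minimal element of the filter $\mtt{P}\setminus\mtt{J}$; the resulting set remains a lower order ideal because a minimal element of the complement has all its strict predecessors already inside. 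Applying this to $\mtt{J}=\mtt{P}_{\mbf{c}}$, we obtain an ideal $\mtt{I}$ of size exactly $n-k$ with $\mtt{P}_{\mbf{c}}\subseteq\mtt{I}$. Then $\mbf{c}\in\mathbb{B}_{\mtt{I}}(\mbf{0})$, so by Lemma~\ref{L:4parts}(c), $\mathbb{B}_{\mtt{I}}(\mbf{0})=\mathbb{B}_{\mtt{I}}(\mbf{c})$, contradicting that the $\mtt{I}$-balls centered at distinct codewords of $D$ (here $\mbf{0}$ and $\mbf{c}$) are disjoint by (2). Hence $d_{\mtt{P}}(D)=n-k+1$, and $|D|=q^k=q^{n-d_{\mtt{P}}(D)+1}$, so $D$ is an additive MDS $\mtt{P}$-code.

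The main subtlety, and where this proof genuinely diverges from the linear case, is the role of condition (1): without it the counting in the forward direction (packing size $|D|\,q^{n-k}$ versus $q^n$) need not even make sense as an integer identity, and the quantity $n-k$ used to select the ideals in (2) might not be an integer either. A secondary technical point is the ideal-extension argument in the converse, but this is routine for finite posets via the minimal-element-of-the-complement trick.
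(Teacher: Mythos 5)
Your proof is correct and follows essentially the same route as the paper's: the forward direction is the same counting argument with disjoint $\mtt{I}$-balls of size $q^{|\mtt{I}|}$ partitioning $\F_q^n$, and the converse is the same contradiction via extending the support ideal of a low-weight codeword to an ideal of size $n-\log_q|D|$ and invoking $\mtt{I}$-perfectness together with the Hyun--Kim--Singleton bound. You additionally make explicit two details the paper leaves implicit (that additivity gives $\mbf{u}-\mbf{v}\in D$, and the minimal-element-of-the-complement justification for the ideal extension), which is welcome but does not change the argument.
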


\begin{proof}
($\Rightarrow$) 
Assume that $D$ is an additive MDS $\mtt{P}$-code. 
Then, as we mentioned in Remark~\ref{R:qdimisinteger}, the $q$-dimension of $D$ is the integer 
\begin{align}\label{A:define_k}
k:=n- d_{\mtt{P}}+1.
\end{align}
This proves (1).
To prove (2), we will use the arguments of the first part of the proof of~\cite[Theorem 2.12]{HyunKim2008}.

Let $\mtt{I}$ be an ideal of size $n-k$, where $k$ is as in (\ref{A:define_k}).
By Lemma~\ref{L:4parts} (d), we know that $\F_q^n$ can be partitioned into $\mtt{I}$-balls.
The number of $\mtt{I}$-balls is $q^{n- |\mtt{I}|} = q^k = |D|$.
Since the minimum distance of $D$ with respect to $\mtt{P}$ is $n-k+1$, for every distinct codewords $x$ and $y$ from $D$, 
we have $\rho_{\mtt{P}}(x,y) \geq n-k+1 > |\mtt{I}|$. 
Hence, by Lemma~\ref{L:4parts} (c), we see that the balls $\mathbb{B}_{\mtt{I}}(u)$ and $\mathbb{B}_{\mtt{I}}(v)$ are disjoint. 
Since there are in total $|D|q^{|\mtt{I}|}= q^k q^{n-k}=q^n$ elements in the union $\bigsqcup_{x\in D} \mathbb{B}_{\mtt{I}}(x)$,
the union $\bigsqcup_{x\in D} \mathbb{B}_{\mtt{I}}(x)$ must equal to $\F_q^n$.
Therefore, we proved that $D$ is an $\mtt{I}$-perfect code. 

\medskip
($\Leftarrow$)
This direction of the proof is similar to the second part of the proof of~\cite[Theorem 2.12]{HyunKim2008}.
Let $D$ be an additive $\mtt{P}$-code such that  
1) the real number $k$ defined by $k:=n-\log_q |D|$ is an integer, and 2) $D$ is an $\mtt{I}$-perfect code for every ideal $\mtt{I}$ of size $k$.
Assume towards a contradiction that there are two elements $x$ and $y$ in $D$ such that $\rho_{\mtt{P}}(x,y) \leq n-k$. 
Then $x$ and $y$ belong to the same $\mtt{J}$-ball such that $|\mtt{J}|\leq n-k$. 
It follows that there is an order ideal $\mtt{I}$ of cardinality $n-k$ such that $\{x,y\}\subseteq \mtt{J}\subseteq \mtt{I}$. 
But this means that $D$ cannot be an $\mtt{I}$-perfect code, contradicting with the hypothesis of the implication ($\Leftarrow$). 
In other words, we must have $\rho_{\mtt{P}}(x,y) > n-k$.
Hence, the minimum distance of $D$ with respect to $\mtt{P}$ is at least $n-k+1$.
We conclude from Lemma~\ref{L:HKSbound} that the minimum distance of $D$ with respect to $\mtt{P}$ must be exactly $n-k+1$.
Hence, $D$ is an additive MDS $\mtt{P}$-code in $\F_q^n$. 
This finishes the proof of our theorem.
\end{proof}

In~\cite{CRSS1998}, for additive codes in $\F_4^n$, a notion of `pureness' was introduced. 
We have a relevant definition.
\begin{Definition}\label{D:pureness}
We say that an additive $\mtt{P}$-code $C\subseteq \F_{q^t}^n$ is {\em $\mtt{P}$-pure} with respect to an inner product on $\F_{q^t}^n$ if the dual code 
$C^{\perp}$ does not contain a vector of $\mtt{P}$-weight less than $d_{\mtt{P}}(C)$.
In other words, $C$ is {\em $\mtt{P}$-pure} if the inequality $d_{\mtt{P}}(C) \leq d_{\mtt{P}}(C^{\perp})$ holds.
More specifically, for an additive code $D\subseteq \F_{q^2}^{n}$, we say that {\em $D$ is $\mtt{P}$}-pure if the inequality 
$d_{\mtt{P}}(D) \leq d_{\mtt{P}}(D^{\perp_a})$ holds.
Similarly, if $C$ is an additive code in $\F_{q}^{2n}$, and the inner product is given by the trace-symplectic form, then 
$C$ is $\Delta\mtt{P}$-pure if $d_{\Delta\mtt{P}}(C)\leq d_{\Delta\mtt{P}}(C^{\perp_s})$ holds. 
\end{Definition}

\begin{Remark}\label{R:Pure}
In light of Lemma~\ref{L:isometry}, we see that a code $D\subseteq \F_{q^2}^n$ is $\mtt{P}$-pure if and only if the corresponding additive code 
$C$ in $\F_q^{2n}$ is $\Delta\mtt{P}$-pure. 
It follows that, if $Q$ is a stabilizer $\mtt{P}$-code, then $Q$ is $\mtt{P}$-pure if and only if the corresponding additive code $D$ is $\mtt{P}$-pure. 
\end{Remark}
The following lemma gives us a helpful criteria for deciding when a self-orthogonal code $C\subset \F_q^{2n}$ is a $\mtt{P}$-pure additive code. 

\begin{Lemma}\label{L:morethantwo} 
Let $\mtt{P}$ be a poset on $[n]$. 
Let $C\subseteq D$ be two $\F_q$-linear $\F_{q^t}$-codes of length $n$ such that the quotient $\F_q$-space $D/C$ has more than two elements.  
Let $d$ denote the minimum distance of $D$ with respect to $\mtt{P}$. 
If the inclusions $\{0\} \subset C \subsetneq D \subset \F_{q^t}^n$ hold, then the minimum distance of the (nonadditive, nonlinear) code $D\setminus C$ is $d$.
\end{Lemma}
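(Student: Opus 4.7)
The plan is to establish that $\min\{wt_{\mtt{P}}(u-v) : u\neq v \in D\setminus C\} = d$, reading the ``minimum distance'' of the nonlinear set $D\setminus C$ in the pairwise sense; for a linear code this reduces to the minimum weight, but here the distinction matters, since $D\setminus C$ is not closed under subtraction. The lower bound is immediate: for any distinct $u,v\in D\setminus C$, the difference $u-v$ is a nonzero element of the $\F_q$-linear code $D$, so $wt_{\mtt{P}}(u-v)\geq d$.

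For the reverse inequality, fix some $w\in D$ with $wt_{\mtt{P}}(w)=d$; the aim is to realize $w$ as $u-v$ for distinct $u,v\in D\setminus C$. Two cases arise depending on whether $w$ lies in $C$. If $w\in C$, pick any $v\in D\setminus C$ (which exists since $C\subsetneq D$) and set $u:=v+w$; then $u$ lies in the coset $v+C$, which is disjoint from $C$, so $u\in D\setminus C$, and $u\neq v$ because $w\neq 0$. If instead $w\in D\setminus C$, look for $v\in D$ avoiding both the coset $C$ and the coset $-w+C$. These cosets are distinct (as $-w\notin C$) and together occupy exactly $2|C|$ elements of $D$. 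The hypothesis $|D/C|>2$ forces $|D|\geq 3|C|$, so $D\setminus(C\cup(-w+C))$ is nonempty; choosing $v$ from this difference and setting $u:=v+w\in D\setminus C$ again yields $u\neq v$ and $u-v=w$ of $\mtt{P}$-weight $d$.

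The main obstacle is pinning down the precise role of the hypothesis $|D/C|>2$: it is invoked exactly once, in the second case, as the minimal amount of room needed to shift a weight-$d$ vector $w\notin C$ by an element of $D$ that avoids both $C$ and $-w+C$. Once this is identified, the remainder of the argument is routine bookkeeping and does not depend on further structural properties of the poset $\mtt{P}$ or of the ambient field $\F_{q^t}$.
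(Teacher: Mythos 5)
Your proof is correct and follows essentially the same strategy as the paper's: translate a minimum-distance configuration so that both endpoints land outside $C$, using the existence of at least three cosets of $C$ in $D$ to find a valid shift. The only cosmetic difference is that you normalize to a single minimum-weight vector $w$ and case-split on $w\in C$ versus $w\notin C$, while the paper case-splits on the locations of the two endpoints $u,v$; the counting argument invoking $|D/C|>2$ is the same in both.
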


\begin{proof}
Let $u$ and $v$ be two vectors from $D$ such that $\rho_{\mtt{P}}(u,v) = d$.
If both of the vectors $u$ and $v$ are elements of $D\setminus C$, then there is nothing to prove. 
We proceed with the assumption that $v\in C$.
If $u$ is an element of $C$ as well, then for every $w\in D\setminus C$ we have $\{u+w,v+w\}\subseteq D\setminus C$.
In this case, our assertion follows directly from the fact that $\rho_{\mtt{P}} ( u+w, v+w ) = \rho_{\mtt{P}}(u,v) =d$. 
Now it remains to show our claim in the case that $v\in C$ and $u\in D\setminus C$. 
In this case, we claim that there exists $w\in D\setminus C$ such that $u+w\in D\setminus C$.
Towards a contradiction, let us assume that for every $w\in D\setminus C$ we have $u+w\in C$.
Since there are three distinct cosets of $C$ in $D$, let us assume that $C, x+C$, and $u+C$ are distinct.
Then $w:=x-u$ is not an element of $C$.
It follows that $u+w \in D\setminus C$. 
This finishes the proof of our lemma.
\end{proof}

\section{Transfer}\label{S:Transfer}

We briefly review the well-known connection between the additive codes in $\F_{q}^{2n}$ 
and the quantum stabilizer codes in $(\C^q)^{\otimes n}$ that we mentioned in the introduction section. 
In fact, we are going to use this connection rather heavily in the rest of our paper, in particular in our proof of our Theorem~\ref{intro:T2} below. 

Let $Q$ be a stabilizer code with stabilizer group $S\leqslant G_n$. 
The elements of the error group $G_n$ are all of the form $\xi^c X(\mbf{a}) Z(\mbf{b})$, where $\xi$ is a fixed $p$-th root of unity, 
$c\in \{0,1,\dots, p-1\}$, $\mbf{a}=(a_1,\dots, a_n)\in \F_q^n$,  and $\mbf{b}=(b_1,\dots, b_n)\in\F_q^n$. 
We consider the map  
\begin{align}\label{A:usedinthenextresult}
\varphi : G_n &\longrightarrow \F_q^{2n} \\
\xi^c X(\mbf{a}) Z(\mbf{b}) &\longmapsto \underline{\mbf{a}\mbf{b}}. \notag
\end{align}
Then, $\varphi$ maps $S$ onto an additive code $C$, which is self-orthogonal with respect to the trace-symplectic form.
In fact, we have 
$$
\varphi( S) = \varphi (Z_{G_n} \rtimes S)= C,
$$
where $Z_{G_n}$ denotes the centralizer of $G_n$.
Finally, $\varphi$ maps the centralizer of $S$ in $G_n$ onto the trace-symplectic dual of $C$, that is, $C^{\perp_s}$. 
For all of these assertions, see the references~\cite[\S V]{AK2001} and~\cite[Theorem 13]{KKKS2006}.

In the context of Hamming metrics, a correspondence between stabilizer codes and additive codes had been previously established. 
Specifically, for binary codes, this connection was first identified by Calderbank, Rains, Shor, and Sloane in their seminal work~\cite{CRSS1997}. 
Subsequently, in the $q$-ary case, Ashikmin and Knill were the first to provide a formal proof~\cite{AK2001}, 
with further refinement and clarification introduced by Ketkar, Klappenecker, Kumar, and Sarvepalli in their comprehensive paper~\cite{KKKS2006}.
We will extend these results by using poset metrics. 

Let us recall the statement of our Theorem~\ref{intro:T2} from Introduction. 
\medskip

{\em 
Let $\mtt{P}$ be a poset on $[n]$. 
Let $Q$ be a stabilizer $\mtt{P}$-code.
Let $S$ denote the stabilizer group of $Q$ in $G_n$. 
Then there is a natural map $\varphi : G_n \to \F_q^{2n}$ sending $S$ onto a self-orthogonal additive code, $C\subseteq \F_q^{2n}$. 
Furthermore, the minimum distance of $Q$ is given by 
\[
\mtt{d}_{\mtt{P}} = \min \{wt_{\Delta\mtt{P}} (\underline{\mathbf{a}\mathbf{b}}) :\  \underline{\mathbf{a}\mathbf{b}} \in C^{\perp_s}\setminus C\},
\] 
where $\perp_s$ indicates the dual code with respect to the trace-symplectic form on $\F_q^{2n}$.
}

\begin{proof}[Proof of Theorem~\ref{intro:T2}]
Let $\varphi$ be the map defined in (\ref{A:usedinthenextresult}). 
As we mentioned earlier, we know that $\varphi(S) = \varphi(Z_{G_n} \rtimes S) = C$, where $C$ is a self-orthogonal additive code with respect to the trace-symplectic form. 
Moreover, $\varphi$ maps the centralizer of $S$ in $G_n$ onto the trace-symplectic dual, $C^{\perp_s}$. 
Now, the weight of an error operator $g:=\xi^c X(\mbf{a}) Z(\mbf{b}) $ is the number of its nonidentity tensor components.
It follows that the operator $\mtt{P}$-weight of $g$, which is defined as the cardinality of the lower order ideal that is generated by the indices of the nonidentity tensor components of $D$, is equal to the symplectic $\mtt{P}$-weight of $\underline{\mbf{a}\mbf{b}}$.
Recall also that the minimum distance of $Q$ is $\mtt{d}_\mtt{P}$ if and only if $Q$ detects all errors in $G_n$ of weight less than $\mtt{d}_\mtt{P}$, but cannot detect some error $g$ of weight $\mtt{d}_\mtt{P}$.
Therefore, the minimum of the operator $\mtt{P}$-weights of all error operators from $C_{G_n}(S)\setminus S$ gives the minimum distance of $Q$ with respect to the metric $d_{\Delta\mtt{P}}$. 
Hence, the proof follows. 
\end{proof}

\section{Singleton's Bound For Stabilizer Poset Codes}\label{S:HKS}

In this section we prove Theorem~\ref{intro:T1}.
Let us recall its statement for convenience. 
\medskip 

{\em Let $\mtt{P}$ be a poset on $[n]$.
If a stabilizer $\mtt{P}$-code has the parameters $[[n,K,\mtt{d}_\mtt{P}]]_q$, where $K>1$, then the following inequality holds: 
$K \leq q^{n-2\mtt{d}_{\mtt{P}}+2}$.}
\begin{proof}[Proof of Theorem~\ref{intro:T1}]
Let $Q$ be a stabilizer $\mtt{P}$-code with parameters $[[n,K,\mtt{d}_{\mtt{P}}]]_q$. 
Let $S$ denote the stabilizer group of $Q$. 
It follows from the orbit-stabilizer correspondence that $K = q^n / |S|$. 
Let $C$ be the additive code that is obtained as the image of $S$ under the map $\varphi$ of Theorem~\ref{intro:T2}.
Evidently, we have $|S| = |C|$. (For more details, see the proof of ~\cite[Theorem 13]{KKKS2006}.)

Let $A(z)$ and $B(z)$ denote the symplectic weight enumerators of $C$ and $C^{\perp_s}$, respectively. 
The following relationship between $A(z)$ and $B(z)$ was discovered in~\cite[Theorem 23]{KKKS2006}:
\begin{align}\label{A:AB}
B(z) = \frac{ (1+(q^2-1)z)^n}{|C|} A \left( \frac{ 1-z} {1+ (q^2-1)z }\right).
\end{align}
Notice that, setting $z=1$ in (\ref{A:AB}), gives us the cardinality, 
$$
B(1) = |C^{\perp_s}|.
$$ 
But on the right hand side of (\ref{A:AB}), we get $\frac{1}{|C|}A(0)$. 
Since $A(0)$ is the number of symplectic weight 0 vectors in $C$, which is equal to 1, we see that 
\begin{align*}
|C^{\perp_s}| = \frac{q^{2n}}{|C|} = q^n K.
\end{align*}

In Lemma~\ref{L:isometry}, we showed that there is an isometry between $(\F_{q}^{2n},\rho_{\Delta\mtt{P}})$ and $(\F_{q^2}^n,\rho_{\mtt{P}})$. 
Let $D^{\perp_a}$ (resp. $D$) denote the isometric image in $\F_{q^2}^n$ of $C^{\perp_s}$ (resp. $C$).
We apply the Hyun-Kim-Singleton Theorem (Lemma~\ref{L:HKSbound}) to the additive code $D^{\perp_a}$ in $\F_{q^2}^n$: 
\begin{align*}
|C^{\perp_s}| = |D^{\perp_a}| \leq q^{2n - 2d_{\mtt{P}}(D^{\perp_a})+2}.
\end{align*}
Notice that the number of elements of the quotient $p$-vector space $D^{\perp_a}/D$, equivalently, the number of elements of the $p$-vector space $C^{\perp_s}/C$, is given by $K^2$. 
Therefore, we see from Lemma~\ref{L:morethantwo} that, whenever $K>1$, we have 
\begin{align*}
q^{2n - 2d_{\mtt{P}}(D^{\perp_a})+2} =q^{2n- 2d_{\mtt{P}}(D^\perp_a\setminus D)+2}. 
\end{align*}
Since we have $d_{\mtt{P}}(D^\perp_a\setminus D)= d_{\Delta\mtt{P}}(C^{\perp_s} \setminus C)$, we obtain
\begin{align}\label{applyHKS}
q^n K = |C^{\perp_s}| \leq q^{2n- 2d_{\mtt{P}}(D^\perp_a\setminus D)+2} = q^{2n- 2d_{\Delta\mtt{P}}(C^{\perp_s} \setminus C)+2}. 
\end{align}
Now it follows from the definition of the minimum distance of a stabilizer code with respect to $\mtt{P}$ that the right hand side of (\ref{applyHKS}) is equal to $q^{2n- 2\mtt{d}_{\mtt{P}}(Q)+2}$.
This finishes the proof of our theorem.
\end{proof}

\section{MDS Property}\label{S:MDS1}

The purpose of this section is to give a proof of Theorem~\ref{intro:T3}.
We first recall its statement.
\medskip

We fix a poset $\mtt{P}$ on $[n]$. 
Let $S$ denote the stabilizer group of a stabilizer $\mtt{P}$-code $Q$.
Let $D$ denote the self-orthogonal additive code in $\F_q^{2n}$ corresponding to $S$, 
which is given by the image of the composition $\psi^{-1} \circ \phi : S \to \F_{q^2}^n$, where $\phi$ is defined in Theorem~\ref{intro:T2} and 
$\psi: \F_q^{2n}\to \F_{q^2}^n$ is the isometry that is defined before Theorem~\ref{intro:T3} in the introduction.  
In this notation, our Theorem~\ref{intro:T3} states the following.
\medskip

{\em 
Let $Q$ be a stabilizer $\mtt{P}$-code with parameters $[[n,K,\qmd{P}]]_q$. 
Let $D=\psi^{-1}\circ \varphi (S)$, where $\psi$ is defined above and $\varphi$ is the map from Theorem~\ref{intro:T2}. 
If $Q$ is $\mtt{P}$-pure, then the following assertions hold: 
\begin{enumerate}
\item[(1)] $Q$ is an MDS stabilizer $\mtt{P}$-code if and only if $D^{\perp_a}$ is an MDS additive $\mtt{P}$-code.
\item[(2)] If $Q$ is an MDS stabilizer $\mtt{P}$-code, and the inequality $n-\mtt{d}_{\mtt{P}} +2 \leq d_{\mtt{P}}(D)$ holds, then $D$ is an MDS additive $\mtt{P}$-code. 
\end{enumerate}}

\begin{proof}[Proof of Theorem~\ref{intro:T3}]

(1) 
We assume that $Q$ is a $\mtt{P}$-pure MDS stabilizer $\mtt{P}$-code. 
Let $C$ denote the additive code in $\F_q^{2n}$ such that $\psi^{-1}(C) = D$. 
It follows readily from Theorem~\ref{intro:T2} and our assumption of $Q$ being $\mtt{P}$-pure that $C^{\perp_s}$ is $\Delta\mtt{P}$-pure.
Since $Q$ is an MDS stabilizer code, we know that $K=q^{n-2\mtt{d}_{\mtt{P}}+2}$. 
Recall from the proof of Theorem~\ref{intro:T1} that $|C| = q^n/K$ and $|C^{\perp_s}| = q^nK$. 
It follows that $|C^{\perp_s}| = q^{2n-2\mtt{d}_{\mtt{P}}+2}$.
Since $|D^{\perp_a}|=|C^{\perp_s}|$ and $\mtt{d}_{\mtt{P}} = d_{\Delta\mtt{P}}(C^{\perp_s})= d_{\mtt{P}}(D^{\perp_a})$, we obtain the following equalities: 
\begin{align*}
|D^{\perp_a}| &= |C^{\perp_s}| \\
&= q^{2n-2\mtt{d}_{\mtt{P}}+2} \\ 
&= q^{2n - 2 d_{\mtt{P}}(D^{\perp_a}) +2} \\
&= (q^2)^{n-d_{\mtt{P}}(D^{\perp_a})+1}.
\end{align*}
Since $D^{\perp_a}$ is a code in $\F_{q^2}^n$, the last equality shows that $D^{\perp_a}$ is an MDS additive $\mtt{P}$-code.

Conversely, if $D^{\perp_a}$ is an MDS additive $\mtt{P}$-code, then $C^{\perp_s}$ is an MDS code with respect to the symplectic $\mtt{P}$-weight.
In other words, we have 
$$
| C^{\perp_s} |= q^{2(n-d_{\Delta\mtt{P}}(C^{\perp_s}) +1)}.
$$
But $|D^{\perp_a}|=|C^{\perp_s}|=q^nK$ implies that $K=q^{n-2d_{\Delta\mtt{P}}(C^{\perp_s}) +2}$.
By Theorem~\ref{intro:T2}, we see that $d_{\Delta\mtt{P}}(C^{\perp_s}) \leq \mtt{d}_{\mtt{P}}$.
At the same time, since $Q$ is $\mtt{P}$-pure, we know the inequality $\mtt{d}_{\mtt{P}}\leq d_{\Delta\mtt{P}}(C^{\perp_s})$.
Therefore, we obtain the equality $\mtt{d}_{\mtt{P}}= d_{\Delta\mtt{P}}(C^{\perp_s})$.
In conclusion, the dimension $K$ of the stabilizer code satisfies identity $K=q^{n-2\mtt{d}_{\mtt{P}} +2}$.
This implies that $Q$ is an MDS stabilizer $\mtt{P}$-code, finishing the proof of part (1) of our theorem.

(2) Our assumption that $Q$ is an MDS stabilizer $\mtt{P}$-code implies that $K=q^{n-2\mtt{d}_{\mtt{P}} +2}$. 
Hence, we deduce, by using the equalities $|D| = |C| = q^n/K$, that $|D| = q^{2\mtt{d}_{\mtt{P}}-2}= (q^2)^{d_{\mtt{P}}(D)-1}$.
Then, by Lemma~\ref{L:HKSbound} (Hyun-Kim-Singleton bound), we see that $\mtt{d}_{\mtt{P}} -1 \leq n- d_{\mtt{P}}(D)+1$.
As part of our hypothesis we have the inequality, $n-\mtt{d}_{\mtt{P}} +2\leq d_{\mtt{P}}(D)$. 
By combining these two inequalities, we obtain 
\begin{align*}
  \mtt{d}_{\mtt{P}} -1 = n-d_{\mtt{P}}(D)+1.
\end{align*}
Of course, this equality yields at once that $|D|=(q^2)^{n-d_{\mtt{P}}(D)+1}$. 
In other words, $D$ is an MDS additive $\mtt{P}$-code.
This finishes the proof of part (2). Hence, our theorem follows.
\end{proof}

In the rest of this section we will prove our second characterization of MDS stabilizer codes in terms of $\mtt{I}$-perfect additive codes. 
We recall our Theorem~\ref{intro:T5}.
\medskip

{\em 
Let $Q$ be a $\mtt{P}$-pure stabilizer code $Q$ with parameters $[[n,K,\qmd{P}]]_q$.
Then $Q$ is an MDS stabilizer $\mtt{P}$-code if and only if $Q$ is an $\mtt{I}$-perfect stabilizer code for every ideal $\mtt{I}$ of size $n-\log_q |D^{\perp_a}|$ 
and the number $\log_q K$ is an integer.
}

\begin{proof}[Proof of Theorem~\ref{intro:T5}]
Since $Q$ is a $\mtt{P}$-pure stabilizer code, 
it follows from Theorem~\ref{intro:T3} that $Q$ is an MDS stabilizer $\mtt{P}$-code if and only if $D^{\perp_a}$ is an MDS additive $\mtt{P}$-code.
Hence, our task is reduced to investigating the $\mtt{I}$-perfectness of $D^{\perp_a}$.
But our Theorem~\ref{T:Additive_MDS_codes} states that 
$D^{\perp_a}$ is an additive MDS $\mtt{P}$-code if and only if $\log_q |D^{\perp_a}|$ is an integer, and $D^{\perp_a}$ is an $\mtt{I}$-perfect code for every ideal $\mtt{I}$ of size $n-\log_q |D^{\perp_a}|$. 
We already pointed out that $|D^{\perp_a}| = K q^n$.  
Now, notice that $\log_q |D^{\perp_a}|$ is an integer if and only if $\log_q K$ is an integer. 
Hence, we the following assertions are equivalent: 
\begin{itemize}
\item $Q$ is an MDS stabilizer $\mtt{P}$-code.
\item $D^{\perp_a}$ is an additive MDS $\mtt{P}$-code.
\item The following two conditions are satisfied:
\begin{enumerate}
\item[(1)] $\log_q K$ is an integer, and
\item[(2)] $D^{\perp_a}$ is an $\mtt{I}$-perfect code for every ideal $\mtt{I}$ of size $n-\log_q |D^{\perp_a}|$. 
\end{enumerate} 
\end{itemize}
This finishes the proof of our theorem.
\end{proof}

\section{Final Remarks}\label{S:MDS2}

In this section, we discuss the existence of MDS stabilizer $\mtt{P}$-codes. 
The classical version of this discussion was established by Hyun and Kim~\cite{HyunKim2008} who showed that for every classical code $D$ there exists an appropriate poset $\mtt{P}$ such that $D$ is an MDS $\mtt{P}$-code.

Let $D$ be an $\F_q$-linear $\F_{q^t}$-code of length $n$, dimension $k$. 
Let $G$ be a $k\times n$ generator matrix for $D$ as an $\F_q$-linear code. 
Clearly, the entries of each column of $G$ span an at most $t$-dimensional $\F_q$-vector subspace of $\F_{q^t}$. 
Let $k_1$ denote the dimension of the $\F_q$-vector subspace spanned by the entries of the first column of $G$. 
By applying elementary row operations to $G$, we define a new generator matrix, denoted $G_1$, such that 
\begin{enumerate}
\item the first $k_1$ entries of the first column of $G_1$ form an $\F_q$-basis for the $\F_q$-span of all entries of the first column of $G$. 
\item The remaining entries of the first column of $G_1$ are 0. 
\end{enumerate}
We repeat this process on the matrix $G_1'$ that is obtained from $G_1$ by removing its first column and first $k_1$ rows.
We construct a new matrix $G_2$ such that the first $k_2$ entries of the first column of $G_2$ form an $\F_q$-basis for the $\F_q$-span of all entries of the first column of $G_1'$, and the remaining entries in the first column of $G_2$ are all 0.  
Then we continue inductively to construct a sequence $k_1,k_2,\dots, k_r$ of {\em row reduction numbers} and a {\em generator matrix $G'$ in reduced form} of $D$.
The following properties are evident from the algorithmic construction:
\begin{itemize}
\item $0\leq k_i \leq t$ for every $i\in [r]$;
\item $k_r \neq 0$; 
\item $k_1+\cdots + k_r = k$;
\item the $k_i$ entries in rows $k_1+\cdots + k_{i-1}+1$ through $k_1+\cdots + k_i$ of column $i$ of $G'$ are $\F_q$-linearly independent elements of $\F_{q^t}$;
\item in the $i$th column of $G'$, the entries in rows $k_1+\cdots + k_i+1$ through $k$ are all 0. 
\end{itemize}

Let $s$ denote $\left\lceil \frac{k}{t} \right\rceil$, where $k$ denotes the $\F_q$-vector space dimension of $D$. 
Let $d$ denote the minimum distance of $D$ with respect to antichain poset $\mtt{P}_0$. 
In~\cite[Theorem 10]{Huffman2013}, by investigating the generator matrix $G'$ in reduced form for $D$, 
Huffman explained the following version of the Singleton's bound for additive codes: 
\begin{align}\label{T:Huffman2013}
s \leq n - d +1.
\end{align}
Here, $d$ is the minimum distance of $D$ with respect to the Hamming metric. 
We proceed to show that (\ref{T:Huffman2013}) can be seen as a special case of Lemma~\ref{L:HKSbound} after replacing $\F_q$ with $\F_{q^t}$.

\begin{Proposition}\label{P:additiveMDS}
Let $\mtt{P}$ be a poset on $[n]$. 
Let $D$ be an $\F_q$-linear $\F_{q^t}$-code of length $n$ with $\F_q$-dimension $k$. Let $s:=\left\lceil \frac{k}{t} \right\rceil$.
Then we have $d_{\mtt{P}}(D)\leq n-s +1$.
In particular, if $\mtt{P}$ is the antichain poset, then we have $d \leq n-s+1$, where $d$ is the minimum distance of $D\subset \F_{q^t}^n$ with respect to the Hamming metric.
\end{Proposition}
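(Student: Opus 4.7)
The plan is to derive the inequality directly from the Hyun--Kim--Singleton bound (Lemma~\ref{L:HKSbound}) applied to $D$ viewed as a subset of $\F_{q^t}^n$, and then take a ceiling to convert the naive bound into the stated one. The construction of the reduced generator matrix $G'$ discussed before the statement is not actually needed for this particular proof; its role is merely to motivate the definition of $s$ as the ``effective length in base $q^t$.''

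First I would record the cardinality: since $D$ is $\F_q$-linear of $\F_q$-dimension $k$, we have $|D| = q^k$. Next, I would invoke Lemma~\ref{L:HKSbound}, which bounds the cardinality of an arbitrary nonempty subset of $(\F_{q^t})^n$ endowed with the $\mtt{P}$-metric. Applied to $D \subseteq \F_{q^t}^n$, this gives
\begin{equation*}
q^k = |D| \leq (q^t)^{\,n - d_{\mtt{P}}(D) + 1}.
\end{equation*}
Taking $\log_q$ of both sides yields $k \leq t\,(n - d_{\mtt{P}}(D) + 1)$, hence $k/t \leq n - d_{\mtt{P}}(D) + 1$.

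Now comes the only slightly delicate point, which is not really an obstacle so much as a bookkeeping observation: the right-hand side $n - d_{\mtt{P}}(D) + 1$ is an integer, while $k/t$ may not be. Therefore the inequality $k/t \leq n - d_{\mtt{P}}(D) + 1$ is equivalent to $\lceil k/t\rceil \leq n - d_{\mtt{P}}(D) + 1$, that is $s \leq n - d_{\mtt{P}}(D) + 1$. Rearranging gives $d_{\mtt{P}}(D) \leq n - s + 1$, as claimed.

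Finally, for the last sentence of the proposition, I would simply note that when $\mtt{P} = \mtt{P}_0$ is the antichain poset on $[n]$, the $\mtt{P}_0$-metric coincides with the Hamming metric (as recalled in Section~\ref{SS:Poset}), so $d_{\mtt{P}_0}(D) = d$ and the inequality specializes to Huffman's bound $d \leq n - s + 1$ in~\cite[Theorem 10]{Huffman2013}. This shows that Huffman's Singleton-type bound for additive codes is indeed a consequence of the more general Hyun--Kim--Singleton bound, as asserted in the text preceding the proposition.
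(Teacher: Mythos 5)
Your proof is correct and follows essentially the same route as the paper: apply the Hyun--Kim--Singleton bound (Lemma~\ref{L:HKSbound}) to $D\subseteq \F_{q^t}^n$, take $\log_q$, use integrality of the right-hand side to replace $k/t$ by its ceiling $s$, and specialize to the antichain poset for the Hamming-metric statement. Your remark that the reduced generator matrix is not needed for this argument is also consistent with the paper, which uses it only as motivation.
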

\begin{proof}
We apply Lemma~\ref{L:HKSbound} to $D$:
$$
|D| \leq (q^t)^{n-d_{\mtt{P}}(D)+1}.
$$
Since $|D|=q^k$, by taking the logarithm of both sides of this inequality, we find that  
\begin{align}\label{HSfollowsfromHKS}
\frac{k}{t} \leq n-d_{\mtt{P}} +1. 
\end{align}
Since the right hand side of (\ref{HSfollowsfromHKS}) is an integer, we can replace the rational number $\frac{k}{t}$ by its ceiling, which is $s$.
This finishes the proof of the first assertion of our proposition. 
For our second assertion, we specialize $\mtt{P}$ to the antichain poset $\mtt{P}_0$.
Hence, we recover the inequality in (\ref{T:Huffman2013}).
This finishes the proof of our proposition.
\end{proof}

\begin{Corollary}\label{C:additiveMDS}
Let $\mtt{P}$ be a poset on $[n]$. 
Let $C$ be a linear code in $\F_q^{2n}$ such that $\dim_{\F_q} C= k$. Let $s:=\left\lceil \frac{k}{2} \right\rceil$.
Then we have $d_{\Delta\mtt{P}}(C)\leq n-s +1$.
\end{Corollary}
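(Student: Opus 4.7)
The plan is to derive this corollary as an immediate consequence of Proposition~\ref{P:additiveMDS} applied with $t=2$, after transferring the situation from $\F_q^{2n}$ to $\F_{q^2}^n$ through the isometry $\psi$ of Lemma~\ref{L:isometry}.

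First, I would set $D := \psi^{-1}(C) \subseteq \F_{q^2}^n$. Since $\psi: \F_{q^2}^n \to \F_q^{2n}$ is an $\F_q$-linear isomorphism, $D$ is an $\F_q$-linear subspace of $\F_{q^2}^n$ of $\F_q$-dimension $k$; that is, $D$ is an $\F_q$-linear $\F_{q^2}$-code of length $n$ in the sense of Section~\ref{S:Additive}. Next, I would invoke Lemma~\ref{L:isometry}, which says that $\psi$ is an isometry between $(\F_{q^2}^n, \rho_{\mtt{P}})$ and $(\F_q^{2n}, \rho_{\Delta\mtt{P}})$; in particular, it preserves minimum weights, so
\[
d_{\Delta\mtt{P}}(C) \;=\; d_{\mtt{P}}(D).
\]

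Finally, I would apply Proposition~\ref{P:additiveMDS} to $D$ with $t=2$: since $\dim_{\F_q} D = k$, we set $s := \lceil k/2 \rceil$ and the proposition yields
\[
d_{\mtt{P}}(D) \;\leq\; n - s + 1.
\]
Combining the two displayed inequalities gives $d_{\Delta\mtt{P}}(C) \leq n - s + 1$, which is exactly the claim of the corollary.

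There is no real obstacle here; the only point that warrants care is verifying that $D = \psi^{-1}(C)$ genuinely has $\F_q$-dimension $k$ (which is immediate because $\psi$ is an $\F_q$-linear bijection) and that Proposition~\ref{P:additiveMDS} applies without requiring $\F_{q^2}$-linearity of $D$, only $\F_q$-linearity, which is precisely the additive setting of that proposition. Thus the corollary reduces to a one-line chain of equalities and inequalities once the isometry is in place.
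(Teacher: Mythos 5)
Your proposal is correct and follows exactly the route of the paper's own proof: pull $C$ back to $D=\psi^{-1}(C)\subseteq \F_{q^2}^n$ via the isometry of Lemma~\ref{L:isometry}, apply Proposition~\ref{P:additiveMDS} with $t=2$, and transfer the bound back. If anything, your write-up is cleaner, since the paper's printed proof swaps the roles of $C$ and $D$ in its final two sentences, a slip your version avoids.
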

\begin{proof}
We use the isometry $\psi$ from Lemma~\ref{L:isometry}.
Let $D$ denote the additive $\F_q$-linear $\F_{q^2}^n$-code defined by $\psi^{-1}(C)$. 
Then Proposition~\ref{P:additiveMDS} implies that $d_{\mtt{P}}(C)\leq n-s+1$.
Since $\psi$ is an isometry, the parameters of $D$ satisfies the same inequality: $d_{\Delta \mtt{P}}(D) \leq n-s+1$.
This finishes the proof.
\end{proof}

We now prove our Theorem~\ref{intro:T4}.  
Let us recall its statement for convenience. 
\medskip

{\em 
Let $D$ be an $\F_{q^2}$-linear code in $\F_{q^2}^n$ such that $D^{\perp_a} \subseteq D$.
If $k$ denotes the dimension $\dim_{\F_{q^2}} D$, then there exists a poset $\mtt{P}$ on $[n]$ and a $\mtt{P}$-pure MDS stabilizer $\mtt{P}$-code $Q$ with parameters $[[ n, n-2k, \mtt{d}_{\mtt{P}}] ]_q$.
}

\begin{proof}[Proof of Theorem~\ref{intro:T4}]
The inclusion $D^{\perp_a} \subseteq D$ implies that, for every poset $\mtt{P}$ on $[n]$, the weights of $D$ and $D^{\perp_a}$ satisfy the inequality  $d_{\mtt{P}}(D) \leq d_{\mtt{P}}(D^{\perp_a})$.
Hence, $D$ is a $\mtt{P}$-pure additive code. 
Now, by~\cite[Theorem 4.2]{HyunKim2008}, we know that there exists a poset $\mtt{P}$ such that $D$ is an MDS $\mtt{P}$-code. 
Let $Q$ denote the stabilizer code determined by $D$. 
By Remark~\ref{R:Pure}, since $D$ is $\mtt{P}$-pure, so is $Q$. 
We now apply Theorem~\ref{intro:T3} (1).
It implies that $Q$ is an MDS stabilizer $\mtt{P}$-code.
It remains to compute the parameters of $Q$. 
Since $Q$ is MDS, we know that the dimension $K$ of $Q$ satisfies:
$$
q^n/|D| = K = q^{n - 2d_{\mtt{P}}(Q)+2}.   
$$
It follows that $|D| = q^{2d_{\mtt{P}}(Q)-2}$.
Since $D$ is an $\F_{q^2}$-linear code of dimension $k$, we see that 
$d_{\mtt{P}}(Q) = k+1$. 
Hence, the proof of our theorem follows. 
\end{proof}

\begin{Remark}
It is known that there is a close relationship between the trace-alternating form and the {\em Hermitian inner product} defined by $\langle \mbf{a}, \mbf{b} \rangle_h:= \mbf{a}^q\cdot \mbf{b}$ where $\mbf{a},\mbf{b}\in \F_{q^2}^n$. 
Here, $\mbf{a}^q$ is the vector obtained from $\mbf{a}$ by raising each entry to its $q$-th power.
For linear subspaces of $\F_{q^2}^n$, the Hermitian inner product dual of a subspace $D\subseteq \F_{q^2}^n$ is equal to the trace-alternating form dual of $D$. In other words, we have $D^{\perp_h} = D^{\perp_a}$.
Hence, our Theorem~\ref{intro:T4} shows that we can create an MDS stabilizer poset code for any isotropic subspace of $\F_{q^2}^n$ with respect to Hermitian inner product. 
\end{Remark}

\section*{Acknowledgements}

This research was conducted while the author was visiting the Okinawa Institute of Science and Technology (OIST) through the Theoretical Sciences Visiting Program (TSVP).
The author gratefully acknowledges the Louisiana Board of Regents grant, contract no. LEQSF(2023-25)-RD-A-21.

\bibliographystyle{plain}
\bibliography{References.bib}
\end{document}